\newtheorem{theorem}{\bf Theorem}
\newtheorem{proposition}{\bf Proposition}
\newtheorem{definition}{\bf Definition}
\begin{document}

\title{{\huge A Cooperative Bayesian Nonparametric Framework for Primary User Activity Monitoring in Cognitive Radio Networks} \vspace{-0.3cm}}
\author{
{ Walid Saad$^1$, Zhu Han$^2$,  H. Vincent Poor$^1$, Tamer Ba\c{s}ar$^3$, and Ju Bin Song$^4$
}\\
\small
$^1$ Electrical Engineering Department, Princeton University, Princeton, NJ, USA, \url{{saad, poor}@princeton.edu}. \\
$^2$ Electrical and Computer Engineering Department, University of Houston, Houston, TX, USA, \url{zhan2@mail.uh.edu}.\\
$^3$ Coordinated Science Laboratory, University of Illinois at Urbana-Champaign, USA, Email: \url{basar1@illinois.edu}.\\
$^4$ College of Electronics and Information, Kyung Hee University, South Korea, Email: \url{jsong@khu.ac.kr}.

\thanks{This work was in part supported by the Air Force Office of Scientific Research Grant FA9550-08-1-0480, in part by the National Science Foundation under Grants CNS-0910461, CNS-0953377, CNS-0905556, ECCS-1028782; and in part by the National Research Foundation of Korea (Grant No.20090075107).}}
\maketitle
\vspace{-1.6cm}
\thispagestyle{empty}
\begin{abstract}
This paper introduces a novel approach that enables a number of cognitive radio devices that are observing the availability pattern of a number of primary users~(PUs), to cooperate and use \emph{Bayesian nonparametric} techniques to estimate the distributions of the PUs' activity pattern, assumed to be completely unknown. In the proposed model, each cognitive node may have its own individual view on each PU's distribution, and, hence, seeks to find partners having a correlated perception. To address this problem,  a coalitional game is formulated between the cognitive devices and an algorithm for cooperative coalition formation is proposed. It is shown that the proposed coalition formation algorithm allows the cognitive nodes that are experiencing a similar behavior from some PUs to self-organize into disjoint, independent coalitions. Inside each coalition, the cooperative cognitive nodes use a combination of Bayesian nonparametric models such as the Dirichlet process and statistical goodness of fit techniques in order to improve the accuracy of the estimated PUs' activity distributions. Simulation results show that the proposed algorithm significantly improves the estimates of the PUs' distributions and yields a performance advantage, in terms of reduction of the average achieved Kullback-–Leibler distance  between the real and the estimated distributions, reaching up to $36.5\%$ relative the non-cooperative estimates. The results also show that the proposed algorithm enables the cognitive nodes to adapt their cooperative decisions when the actual PUs' distributions change due to, for example, PU mobility.

\end{abstract}
%{\small \textbf{Keywords:} Vehicle-to-roadside communications, coalitional game theory, game coalition formation, vehicular networks.}

\newpage
 \setcounter{page}{1}
\section{Introduction}
Cognitive radio has been proposed as a novel communication paradigm that allows an efficient sharing of the under-utilized radio spectrum resources between licensed or primary users~(PUs) and unlicensed or secondary users~(SUs)~\cite{CR01,CR02}. Cognitive radio networks are based upon flexible spectrum management techniques that allow licensed and unlicensed users to share the spectrum, while avoiding collisions with one another. The main enablers of such cognitive communications are smart SU devices that can intelligently and dynamically monitor the spectrum, operating only when the PUs are inactive and making sure to vacate the spectrum whenever a PU starts its transmission. Hence, one of the key challenges faced in the design of cognitive radio networks is to ensure dynamic spectrum sharing while maintaining a conflict-free coexistence between primary and secondary users~\cite{CR02,XS04,XS03,XS00,XS01,SA00,Chaud,Lund,Axel}.

In order to detect the PUs' activity, the SUs are typically equipped with sensing capabilities (e.g., energy or cyclostationarity detectors) that enable them to autonomously detect unoccupied spectrum and transmit opportunistically - e.g.,~\cite{Chaud,Lund,SA00,DT00,Axel,XS00,XS01} (see \cite{Axel} for a comprehensive review). Spectrum sensing is a key step for deploying robust cognitive radio networks and has received significant attention~\cite{CS00,CS03,CS07,CS08,CS02,DT00,XS00,XS01,SA00,CS04,CS05,CS06,Chaud,Lund,Axel}. In particular, advanced spectrum sensing techniques such as \emph{cooperative sensing} have been proposed in~\cite{CS00,CS03,CS07,CS08,CS02,CS04,CS05,CS06,Lund} so as to improve the SU's detection capability. The main idea of cooperative sensing is to combine different SU observations so as to have a better decision on whether a PU is present or not, at a given time instant. In \cite{CS00}, the authors propose centralized schemes enabling the SUs to share their sensing decisions given a known PU distribution. The work in \cite{CS03} studies the impact of reporting channel errors on collaborative sensing. In \cite{CS07} and \cite{CS08}, the use of relaying techniques for improving cooperative sensing is thoroughly analyzed. Other performance aspects of cooperative sensing are studied in \cite{Lund,CS04,CS05,CS06,CS02}.

 However, performing cooperative or non-cooperative sensing is known to be a time consuming process that can affect the access performance of the SUs, notably in multi-channel networks~\cite{SA00,CR02,XS04}.  To overcome this problem, recent research activities brought forward the idea of providing, using \emph{control channels}, spectrum monitoring assistance to the SUs so as to improve their performance~\cite{CH01,CH03,CH04,CH06,XS02}. These channels can be used in conjunction with advanced techniques such as cooperative spectrum sensing so as to provide additional information to the SUs that can improve their sensing decisions. For example, the authors in \cite{CH01} studied how a Common Spectrum Coordination Channel~(CSCC) can be used to announce radio and service parameters to the SUs. More recently, the Cognitive Pilot Channel~(CPC) has been introduced \cite{CH03,CH04,CH06,XS02} as a control channel that can convey critical information to the SUs, allowing them to enhance their sensing and access decisions, notably in the presence of multiple channels (i.e., PUs) and access technologies.

Essentially, the CPC is a channel that can carry different information such as estimates of the activity of the PUs, frequency or geographical data, that the SUs can use to improve their sensing, to avoid scanning the entire spectrum for finding spectral holes, and to get a better perception of their environment (e.g., locations and frequencies of the PUs)~\cite{CH03} and \cite{CH04}. Deploying the CPC in a practical network can be done either using existing infrastructure (e.g., existing cognitive users or base stations) or by installing dedicated nodes that carry CPC data, i.e., \emph{CPC nodes}. For transmitting the CPC data, a variety of methods can be used, as proposed in \cite{CH03,CH04,CH06}. %The work in \cite{CH06} proposes two approaches for sending CPC data: either by broadcasting to all SUs or by using an on-demand scheme. In \cite{CH03}, the authors focus on the variety of information that a CPC channel can carry while discussing the benefits of each type of information. %Further, the work in \cite{XS02}, studies the efficiency of the CPC broadcast mode, given a mesh-divided network. %Clearly, providing channels such as the CPC to assist the SUs in exploring spectrum opportunities is expected to lie at the heart of future cognitive systems.

%The information conveyed from control channels such as the CPC can be combined with advanced spectrum sensing technique, such as cooperative sensing~\cite{CS00,CS01,CS02,CS03,CS04,CS05,CS06,CS07}, to improve the detection of the system. Cooperative sensing is, in fact, one of the key techniques for improving spectrum sensing. In \cite{CS00} and \cite{CS01}, the authors propose centralized schemes enabling the SUs to share their sensing decisions given a known PU distribution. The work in \cite{CS03} study the impact of the reporting channel error on collaborative sensing. Other performance aspects of cooperative are studied in \cite{CS04,CS05,CS06,CS07}.

The use of CPCs and cooperative spectrum sensing have received considerable attention in the research community. However, on the one hand, most of the existing work on CPC deployment such as~\cite{CH03,CH04,CH06,XS04} has focused on implementation and transmission aspects. On the other hand, existing cooperative spectrum sensing techniques such as in \cite{CS00,CS02,CS03,CS04,CS05,CS06,CS07} often assume that the PU's activity follows a certain known or assumed distribution. However, no work seems to have investigated how cognitive device such as CPC nodes can be used to provide information on the activity of the PUs in a practical cognitive network. This primary user activity information can be used, subsequently, to improve the decisions of both cooperative and non-cooperative spectrum sensing. To operate efficiently, the SUs must obtain a good overview of the activity of the PUs, so as to access the spectrum at the right time and for a suitable duration. Moreover, this information is important to improve the cooperative decisions for collaborative sensing techniques such as in ~\cite{CS00,CS02,CS03,CS04,CS05,CS06,CS07}. The objective of this paper is to leverage the use of control channels such as the CPC in order to convey to the SUs accurate estimates of the distribution of the activity of the PUs, which is often sporadic and unknown. In addition, a given PU channel can be seen differently by CPC nodes positioned in different locations of a cognitive network. In most cooperative sensing or CPC literature, these different PU views are often simplified or assumed to be fixed. However, in practice, this assumption may not hold due to a variety of factors such as the locations of the PU transmitters or their power capabilities.  Therefore, developing efficient schemes that allow the cognitive nodes to obtain (e.g., through a CPC) accurate estimates of the PUs' channel availability patterns is a challenging task that is of central importance in maintaining a conflict-free environment between SUs and PUs. To the best of our knowledge, this paper is the first that treats this problem, notably from a cooperative approach that uses Bayesian nonparametric inference as well as game theoretic techniques.

The main contribution of this paper is to introduce a novel cooperative approach between cognitive devices such as CPC nodes that allows them to share their observations on the distributions of the PUs' activity, and, subsequently, build an accurate estimate of these distributions. In particular, given a number of PUs whose availability is perceived differently by a number of CPC nodes, we propose a scheme that allows these nodes to cooperate in order to  estimate the distributions of the PUs' activity, assumed to be completely unknown. In this context, we formulate a coalitional game between the CPC nodes and we develop a suitable coalition formation algorithm. The proposed game allows the CPC nodes to decide, in a distributed manner, on whether to cooperate or not, based on a utility that captures the gain from cooperation, in terms of an improved estimate of the PUs' distributions, and a cost for coordination. Each group of CPC nodes that decides to cooperate and form a coalition will subsequently use \emph{Bayesian nonparametric techniques}, based on the Dirichlet process, as well as goodness of fit statistical tests, to cooperatively infer the perceived distributions of the PUs' activity. We show that, by performing coalition formation, the CPC nodes self-organize into a network of disjoint and independent coalitions that form a Nash-stable partition in which each node has a significantly improved estimate of all the PUs' activity. Simulation results show that the proposed cooperative approach yields a significant performance improvement.%, in terms of reducing the average Kullback-–Leibler distance between the real and the estimated PUs' distributions, reaching up to $36.5\%$ relative to the non-cooperative estimates.

The remainder of this paper is organized as follows: Section~\ref{sec:sysmodel} presents the system model. In Section~\ref{sec:hedonic}, we present the proposed cooperative Bayesian nonparametric scheme and we model it using coalitional game theory. In Section~\ref{sec:algo}, we propose an algorithm for distributed coalition formation. Simulation results are analyzed in Section \ref{sec:sim} and conclusions are drawn in Section \ref{sec:conc}.\vspace{-0.2cm}

\section{System Model}\label{sec:sysmodel}
Consider a network of $N$ cognitive radio devices that are seeking to transmit, opportunistically, over $K$~channels that represent a number of PUs. These devices can be either SUs, fixed secondary base stations, or other fixed or mobile cognitive radio nodes. One typical example of these devices would be a number of cognitive nodes dedicated to provide information to the SUs, e.g., CPC-carrying nodes. Hereinafter, for brevity, we use the term CPC or CPC node to refer to any such cognitive node. The set of all CPCs is denoted by $\mathcal{N}$ while the set of PUs is denoted by $\mathcal{K}$. At any point in time, from the perspective of any CPC node $i\in \mathcal{N}$ (and the SUs in its vicinity), every PU $k \in \mathcal{K}$ is considered to be active, i.e., its channel is occupied, with a probability $\theta_{ik}$. For a given PU $k \in \mathcal{K}$, two distinct CPC nodes $i,j \in \mathcal{N},\ i\neq j$ can see a different value of the probability that $k$ is active, i.e., $\theta_{ik}\neq \theta_{jk}$, depending on various factors such as the distance to the PU, wireless channel fading, or the PU's transmission capabilities. For example, from the perspective of a CPC $i$ that is in the vicinity of a PU $k$, even when the PU uses a small power for transmission, PU $k$'s channel is still seen as being occupied due to the small path loss (or fading) between CPC $k$ and PU $i$. In contrast, from the point of view of another CPC $j$ that is located far away from the same PU $k$, the channel used by PU $k$ appears to be vacant whenever PU $k$'s transmit power is low. The main reason behind these different observations is that, unlike CPC node $i$, CPC node $j$ and the SUs that it serves experience a low interference from a PU $k$ located at a relatively large distance and whose transmit power is attenuated by a reasonably significant channel fading. As a result, from the perspective of CPC node $j$, PU $k$ channel's would be seen as vacant even when it appears occupied to CPC node $i$. In such an illustrative scenario, for the same PU $k$, we would have $\theta_{ik} > \theta_{jk}$. %Note that the ideal scenario in which all CPCs see a similar activity from all PUs is a particular case of our generic model.

Often, the PUs can change their pattern of activity depending on many random parameters, e.g., due to their nature or capabilities. For example, when the PUs represent the mobile nodes of a wireless system (e.g., an LTE or 3G system), they may frequently change their activity depending on the time of the day or the region in which they operate. Hence, for a given PU $k$, the value of the probability $\theta_{ik}$ from the perspective of any CPC $i\in \mathcal{N}$, can be seen as a random variable having a certain probability distribution $P_{ik}(\theta_{ik})$ which is a probability density function over the state space ${\Theta}=[0,1]$ of $\theta_{ik},\ \forall i \in \mathcal{N},\ k \in \mathcal{K}$. Moreover, we consider that the CPCs in $\mathcal{N}$ have \emph{no} prior knowledge on the distribution of the PUs' activity. Thus, for \emph{any} CPC $i\in \mathcal{N}$ and any PU $k \in \mathcal{K}$, the actual real distribution $P_{ik}(\theta_{ik})$ is \emph{completely unknown} by the CPC. Hereinafter, for brevity, we use the term the expression \emph{distribution of the PUs} or \emph{PUs distribution} to refer to the distribution of the PUs' activity/availability.

Each CPC $i\in \mathcal{N}$ performs a limited number of $L_{ik}$ observations $\mathcal{L}_{ik}=\{\theta_{ik}^{1},\ldots,\theta_{ik}^{L_{ik}}\}$  for every PU channel $k \in \mathcal{K}$ so as to get an estimate of the distributions $P_{ik}(\theta_{ik})$. Each observation $\theta_{ik}^t\in\mathcal{L}_{ik}$ is a value for the probability $\theta_{ik}$ observed at a time period $t$.  To obtain $\mathcal{L}_{ik}$ for a channel $k$, a CPC needs to monitor, over a given period of time $t$, the activity of PU $k$ and record the resulting probability $\theta_{ik}^{t}$. This process can be seen as a sampling of the PU's activity distribution. We note that the time period $t$ during which a single observation is recorded must be reasonably large so as to enable the cognitive device to record a reasonably accurate observation. In practice, the exact value for this period is dependent on the cognitive network's implementation, the nature of the PU (whether it is a television transmitter, a mobile device, a base station, etc.) and can be adjusted by the CPC  accordingly. Due to this, the number of observations $L_{ik}$ for each PU channel $k$ is, in practice, small, due to the time consuming nature of this process. The small value of $L_{ik}$ is further corroborated by the fact that, in addition to PU activity estimation, a CPC also needs to perform other tasks such as acquiring frequency and geographical information, and, hence, it cannot dedicate all of its resources to the PU activity estimation process. We note that,  in a given time period, the observations $\mathcal{L}_{ik}$ are the only information that a CPC node $i$ has about the behavior of PU $k$.

Having recorded the observations $\mathcal{L}_{ik}$, each CPC $i \in \mathcal{N}$ must infer the distribution of every PU $k \in \mathcal{K}$.  Given $\mathcal{L}_{ik}$, a CPC $i$ can predict the distribution of the next observation $\theta^{L_{ik}+1}_{ik}$ using the following expression:
\begin{equation}\label{eq:discnc}
H_{ik}(\theta^{L_{ik}+1}_{ik}\in \mathcal{A}|\theta_{ik}^{1},\ldots,\theta_{ik}^{L_{ik}})= \frac{\sum_{l=1}^{L_{ik}}\delta_{\theta_{ik}^l}(\mathcal{A})}{L_{ik}},
\end{equation}
where $\mathcal{A} \subseteq {\Theta}$ is a subset of the space ${\Theta}$ and $\delta_{\theta_{ik}^l}$ is the point mass located at $\theta_{ik}^l$ such that $\delta_{\theta_{ik}^t}(\mathcal{A})=1$ if $\theta_{ik}^t \in \mathcal{A}$ and $0$ otherwise.

When acting non-cooperatively, each CPC can compute the distribution of  $\theta^{L_{ik}+1}_{ik}$ using (\ref{eq:discnc}) which is discrete. Given the limited number of observations $L_{ik}$, using (\ref{eq:discnc}) can yield a large inaccuracy in the estimation. In order to get a more accurate, continuous estimate $\tilde{H}_{ik}$  of the distribution ${H}_{ik}$ in (\ref{eq:discnc}), each CPC $i$ can adopt \emph{kernel density estimation} or kernel smoothing techniques \cite{KDE00,KDE01,KDE02}. As explained in \cite{KDE00}, kernel density estimation methods are popular nonparametric estimators used to draw inferences about a certain distribution based on finite data samples. Kernel density estimation methods aim at smoothing a discrete function in four main steps \cite{KDE01}: (i)- Choosing a \emph{kernel} function which is a symmetric but not necessarily positive continuous function that integrates to one and a scaling factor commonly known as \emph{bandwidth} which controls the smoothness of the estimate, (ii)- Placing the center of the chosen kernel over each observed data point, (iii)- Spreading the influence of each data point over its neighborhood, and, (iv)- Summing the contributions from each data point in order to generate the final estimate.

 One popular kernel density estimator is the Gaussian kernel estimator in which the kernel is chosen as a Gaussian distribution whose bandwidth is its mean~\cite{KDE01}. Then, this kernel is convoluted to the discrete function (or the observations) so as to generate the density estimate. While a detailed treatment of kernel density estimation techniques is beyond the scope of this paper,\footnote{The interested reader is referred to \cite{KDE00} or \cite{KDE01} for further information.} for the proposed model, we assume that, when acting non-cooperatively, the CPCs utilize the generic kernel density estimation via the linear diffusion approach of \cite{KDE02}, in order to obtain a continuous version $\tilde{H}(\theta^{L_{ik}+1}_{ik}\in \mathcal{A}|\theta_{ik}^{1},\ldots,\theta_{ik}^{L_{ik}}))$ of (\ref{eq:discnc}) which constitutes the non-cooperative \emph{kernel estimate}. Note that other kernel estimation techniques can also be adopted, without loss of generality.

Non-cooperatively, the kernel estimate is the most reasonable estimate that any CPC $i\in \mathcal{N}$ can obtain, given its limited number of observations. However, as the number $L_{ik}$ of available observations is generally small, the kernel estimate of the PUs distributions may not perform as well as required by the cognitive network. Therefore, the CPCs need to seek alternative methods to improve their estimate of the PUs distributions without a need for continuous and real-time observation of the PUs' behavior. One approach to solve this problem, which we introduce in this paper, is to let the CPCs interact and cooperate, when possible, in order to improve their perception of the PUs distributions. In particular, CPCs that are observing similar PUs distributions would have an incentive to form cooperative groups, i.e., coalitions, so as to share observations and improve their estimates.

\begin{figure}[!t]
\begin{center}
\includegraphics[width=10cm]{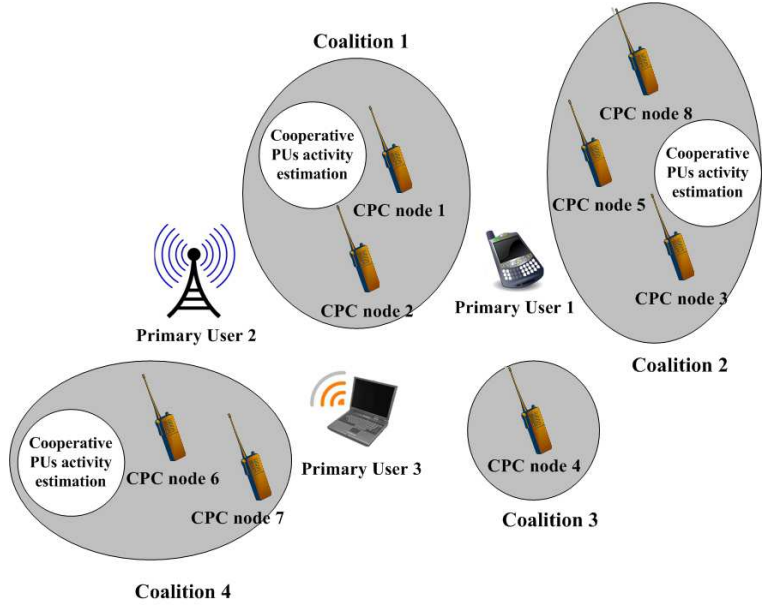}
\end{center}\vspace{-0.6cm}
\caption {An illustration showing a cognitive radio network with $8$ cooperative CPC nodes that form coalitions and can jointly estimate the distributions of $3$~primary users.} \label{fig:ill}\vspace{-0.15cm}
\end{figure}
In Figure~\ref{fig:ill}, we illustrate how cognitive nodes that have a somewhat similar view on the existing PUs group together into coalitions, so as to perform cooperative estimation of the PUs' distributions, for a network with $8$~CPCs and $3$~different PUs. For example, in Figure~\ref{fig:ill}, as CPCs $6$ and $7$ are located almost symmetrically around the $3$~PUs, they find it beneficial to cooperate and share their observations. Similarly, the other CPCs in Figure~\ref{fig:ill} make their cooperative decisions, depending on the correlation between their perceived PUs distributions as well as the potential of having a better estimate. To perform cooperative estimation such as in Figure~\ref{fig:ill}, several challenges must be overcome such as correctly determining whether the cooperative CPCs are observing similar distributions as well as identifying the benefits and costs from cooperation. In this respect, the next section will propose novel solutions to the aforementioned challenges.

\section{Cooperative Bayesian Nonparametric Estimation of Primary Users' Activity}\label{sec:hedonic}
To model the problem illustrated in Figure~\ref{fig:ill}, we will use the analytical tools of coalitional game theory \cite{Game_theory2,WS00}. In particular, we are interested in formulating the proposed CPC cooperation problem as a coalitional game with a non-transferable utility defined as follows \cite[Chap. 9]{Game_theory2}:
\begin{definition}
A coalitional game with \emph{non-transferable utility} consists of a pair $(\mathcal{N},V)$ in which $\mathcal{N}$ represents the set of players and $V$ is a mapping that assigns for any coalition $S \subseteq \mathcal{N}$ a set of payoff vectors that the members of $S$ can achieve. $V(S)$ is a closed and convex subset of $\mathbb{R}^{S}$ .
\end{definition}

In the proposed model, one can see that the set of players $\mathcal{N}$ is the set of CPC nodes. In order to completely describe the coalitional game between the CPCs, our next step is to determine the mapping $V$ which reduces to proposing a utility function that captures the gains and costs that each CPC achieves when cooperating within a certain coalition $S \subseteq \mathcal{N}$. To determine this utility function, we must first provide a cooperative procedure that the CPCs belonging to any potential coalition $S$ can adopt. First, whenever the CPC nodes decide to form a coalition $S \subseteq \mathcal{N}$, the CPCs in $S$ would be able to share their kernel estimates of the PUs distributions generated non-cooperatively based on (\ref{eq:discnc}). Hence, within any potential coalition $S$, each CPC can obtain the PUs distribution estimates from its partners and, if deemed suitable, use these distributions as \emph{prior distributions} so as to generate new estimates. Inherently, for a given coalition $S \subseteq\mathcal{N}$, each CPC $i \in S$ must be able to perform the following three steps \emph{for every PU} $k \in \mathcal{K}$:
\begin{enumerate}
\item \textbf{Step~1 - Check Priors Validity:} The first step for each CPC $i \in S$ is to determine, for every PU $k$, whether the prior distributions  received from its cooperative partners in $S \setminus \{i\}$ come from the same distribution as CPC $i$'s own estimate (based on its own  set of observations $\mathcal{L}_{ik}$ for PU $k$).
\item \textbf{Step 2 - Generate New Estimate:} Once a CPC $i \in S$ generates, for any PU $k$, a list of received priors that come from the same distribution as its own estimate (i.e., from CPCs in $S$ that perceive PU $k$'s activity analogously to CPC $i$), its next step is to generate a procedure for combining these received prior distributions with its non-cooperative kernel estimate.
\item \textbf{Step 3 - Assess the Accuracy of the New Distribution:} Given the new estimates generated in Step~2, the last step for CPC $i \in S$ is to assess the accuracy of the resulting distributions.
\end{enumerate}
We will approach the first step using concepts from statistics known as goodness of fit tests while the second step will be performed using a Bayesian nonparametric inference method based on the Dirichlet process. Then, the third step is approached using the concept of a Kullback-–Leibler~(KL) distance. Finally, all three steps will be combined in a single utility function which completes the coalitional game formulation.

\subsection{Priors Validity Check}
Given a coalition $S \subseteq \mathcal{N}$, any CPC $i \in S$ can use \emph{goodness of fit} techniques~\cite{KS00,KS01} to assess whether the set of  kernel estimates received from the CPCs in $S\setminus \{i\}$ regarding the distributions of any PU $k$ come from the same distribution as CPC $i$'s own set of observations $\mathcal{L}_{ik}$.
The goodness of fit of any statistical model provides a description of how well a certain model fits a set of observations or samples~\cite{KS00,KS01}. Goodness of fit tests are one of the most common methods that can be used for identifying whether two sets of observations or samples come from the same distribution or not\footnote{Goodness of fit tests can also be used for other purposes such as comparing an empirical and a theoretical model (see \cite{KS00} for more details.)}.

For the proposed CPC cooperation model, given a CPC $i$ member of a coalition $S$ that receives, from another CPC $j \in S$, a certain kernel estimate $\tilde{H}_{jk}$ for the distribution of PU $k$'s activity, CPC $i$ needs to determine whether   $\tilde{H}_{jk}$ and  its own estimate  $\tilde{H}_{ik}$ are estimates of the same distribution. In other words, each CPC $i$ must identify whether a given cooperating partner CPC $j$ is observing a similar distribution regarding the activity of a certain PU $k$. To do so, CPC $i$ first generates two
sets of samples $\mathcal{H}_{ik}$ and $\mathcal{H}_{jk}$ from $\tilde{H}_{ik}$ and $\tilde{H}_{jk}$, respectively. The samples in $\mathcal{H}_{ik}$ can simply be the original observations $\mathcal{L}_{ik}$ of CPC $i$ or newly generated samples using the continuous kernel estimate $\mathcal{H}_{ik}$. Here, sampling refers to the process of obtaining samples from a distribution function which does not require observing the PU behavior and is commonly performed in  wireless networks.

Then, in order to identify whether $\mathcal{H}_{ik}$ and $\mathcal{H}_{jk}$  come from the same distribution, CPC $i$ uses the \emph{two-sample Kolmogorov-–Smirnov goodness of fit test}, defined as follows~\cite{KS00,KS01}:
\begin{definition}
Consider two sets of observations $\mathcal{H}_{ik}$ and $\mathcal{H}_{jk}$ having, respectively, $h_{ik} =|\mathcal{H}_{ik}|$ and $h_{jk} =|\mathcal{H}_{jk}|$ samples. The \emph{Kolmogorov-–Smirnov statistic} is defined as
\begin{equation}\label{eq:ksstat}
D_{h_{ik},h_{jk}}=\sup_{x}|F_{h_{ik}}(x) - F_{h_{jk}}(x)|,
\end{equation}
where $F_{h_{ik}}$ and $F_{h_{jk}}$ represent the empirical cumulative distribution functions of the samples in $\mathcal{H}_{ik}$ and $\mathcal{H}_{jk}$, respectively. Given $D_{h_{ik},h_{jk}}$, the \emph{two-sample Kolmogorov–-Smirnov~(KS) goodness of fit test} decides that the hypothesis: ``The samples in $\mathcal{H}_{ik}$ and $\mathcal{H}_{jk}$ come from \emph{same} distribution'' is \emph{true} with a significance level $\eta$, if
\begin{equation}
\sqrt{\frac{h_{ik}h_{jk}}{h_{ik}+h_{jk}}}D_{h_{ik},h_{jk}} \le M_{\eta},
\end{equation}
with $M_{\eta}$ a critical value that can be set according to well-defined tables~\cite{KS00}.
\end{definition}

Thus, the two-sample KS test determines whether two sets of samples come from the same distribution or not, without the need for any information on what that distribution actually is. A variety of goodness of fit tests exist, each of which has its own characteristics and practical applications. We have adopted the two-sample KS test due mainly to two reasons~\cite{KS00,KS01}: (i)- It is one of the tests that are most sensitive to differences in both the location and the shape of the empirical cumulative distribution functions of any two sets of observations being compared, and (ii)- It provides a good balance between the complexity and the accuracy of the decision~\cite{KS00,KS01}. For the CPC cooperation problem, this test will be used by each CPC $i$, a member of a coalition $S$, in order to determine whether the estimates received from the CPCs in $S \setminus \{i\}$ come from the same distribution as CPC $i$'s own estimate.\footnote{The CPCs can easily generate a number of samples good enough to ensure the accuracy of the KS test.} As a result, a cooperative CPC $i$ can decide whether a received estimate is valid to be used as a \emph{prior} distribution in order to improve its estimate for some PU $k$.

Subsequently, given any coalition $S$ and any CPC $i\in S$, we let $S_{ik}^\textrm{KS} \subseteq \{S \setminus \{i\}\}$ denote the set of CPCs in $S \setminus \{i\}$  whose estimates regarding the distribution of the activity of PU $k$ have been approved as valid priors by CPC $i$, using the two-sample KS test. Note that, if, for a PU $k$, CPC $i$ could not find any valid prior in $S$, then $S_{ik}^\textrm{KS}=\emptyset$. After the KS test, the next step for any CPC $i \in  S$ is to choose the priors that can potentially improve its estimate of the PUs distributions.

\subsection{A Bayesian Nonparametric Approach for Cooperative Estimate Generation}
 Once a CPC $i$ member of a coalition $S$ determines the set $S_{ik}^\textrm{KS}$ for every PU $k$ using the KS test, this CPC would build a $|S|\times 1$ vector $\tilde{\boldsymbol{H}_k}$ whose elements are the validated priors as received from the CPCs in $S_{ik}^\textrm{KS}$. Given the vector $\tilde{\boldsymbol{H}_k}$, the next step for CPC $i$ is to combine these priors with its own estimate $\tilde{H}_{ik}$ in order to find the posterior distribution, i.e., a new estimate ${H}^S_{ik}(\theta^{L_{ik}+1}_{ik}|\theta_{ik}^{1},\ldots,\theta_{ik}^{L_{ik}})$. To do so, we propose an approach based on Bayesian nonparametric models, namely, using the concept of a \emph{Dirichlet process}~\cite{DP00,DP01,DP02}. The use of such a Bayesian \emph{nonparametric} model, based on Dirichlet processes, is motivated by the following properties~\cite{DP00,DP01,DP02}: (i)- Dirichlet processes are known to be one of the most accurate models that can be applied for modeling \emph{unknown distributions}, (ii)- Dirichlet processes provide \emph{flexible models} that enable one to control the impact of each set of information used in estimation (e.g., the impact of the validated priors), and (iii)- Bayesian nonparametric models can automatically infer an adequate distribution model from a limited data set with little complexity and without requiring an explicit model comparison such as in classical Bayesian approaches.

Before formally defining the Dirichlet process, we must introduce the concept of a \emph{Dirichlet distribution} as follows~\cite{DP00}:

\begin{definition}
Consider a set of events $(X_1,\ldots,X_M)$ that are observed with probabilities $(p_1,\ldots,p_M)$. A \emph{Dirichlet distribution} of order $M \ge 2$ with parameters $(\alpha_1,\ldots,\alpha_M),\ \alpha_i > 0, i \in \{1,\ldots,M\}$ has a probability density function given by:
\begin{equation}
\textrm{Dir}(\alpha_1,\ldots,\alpha_M) = f(p_1,\ldots,p_M;\alpha_1,\ldots,\alpha_M) =  \frac{1}{Z(\alpha_1,\ldots,\alpha_M)}\prod_{i=1}^M p_i^{\alpha_i-1},
\end{equation}
where $\alpha_i-1$ can be interpreted as the number of observations of event $X_i$ and $Z(\alpha_1,\ldots,\alpha_M)$ is a normalization constant given by:
\begin{equation}
Z(\alpha_1,\ldots,\alpha_M)=\frac{\prod_{i=1}^M\Gamma(\alpha_i)}{\Gamma(\sum_{i=1}^M\alpha_i)},
\end{equation}
where $\Gamma(\cdot)$ is the gamma function.
\end{definition}

A Dirichlet distribution $\textrm{Dir}(\alpha_1,\ldots,\alpha_M)$ is the conjugate prior of a multinomial distribution with probabilities $(p_1,\ldots,p_M)$ and can also be seen as a generalization of the beta distribution to the multivariate case~\cite{DP00}. In contrast to a Dirichlet distribution,  a \emph{Dirichlet process} is a stochastic process that is a distribution over probability measures, which are functions that can be interpreted as distributions over a space ${\Theta}$. A draw from a Dirichlet process can be seen as a random probability distribution over the space $\Theta$~\cite{DP00,DP01,DP02}. A distribution over probability measures that is drawn from a Dirichlet process has a marginal distribution that constitutes a Dirichlet distribution. Formally, given a probability distribution $H$ over a continuous space $\Theta$ and a positive real number $\alpha$, the Dirichlet process is defined as follows~\cite{DP00}:

\begin{definition}
A random distribution $G$ on a continuous space $\Theta$ is said to be distributed according to a \emph{Dirichlet process} $\textrm{DP}(\alpha,H)$ with base distribution $H$ and concentration parameter $\alpha$, i.e., $G\sim \textrm{DP}(\alpha,H)$, if
\begin{equation}
(G(\mathcal{A}_1,\ldots,G(\mathcal{A}_r)) \sim\textrm{Dir}(\alpha H(\mathcal{A}_1),\ldots,\alpha H(\mathcal{A}_r)),
\end{equation}
for every finite measurable partition $\{\mathcal{A}_1,\ldots,\mathcal{A}_r\}$ of ${\Theta}$.  The base distribution $H$ represents the mean of the Dirichlet process, i.e., $\operatorname{E}[G(\mathcal{A})]=H(\mathcal{A})$ for any measurable set $\mathcal{A} \subset {\Theta}$ while $\alpha$ is a parameter that highlights the strength of a DP when it is used as a nonparametric prior.
\end{definition}

The Dirichlet process is thus a stochastic process that can be seen as a distribution over distributions, as every draw from a DP represents a random distribution over ${\Theta}$. The base distribution $H$ of a $\textrm{DP}(\alpha,H)$ is often interpreted as a prior distribution over $\Theta$. For the proposed game, the CPCs can use the DP as a means for generating estimates of the PUs distributions by using the validated priors that they received from their cooperating partners. As the real distributions of the PUs are unknown to the CPCs, the CPCs will assume these PUs distributions to be distributed according to a Dirichlet process. Subsequently, each CPC $i$, member of a coalition $S$, needs to combine its own observations about a PU $k$ with the Dirichlet processes received from other cooperating CPCs in the set of validated priors $S_{ik}^\textrm{KS}$.

It is known, from Bayesian nonparametrics, that the combination of a number of independent Dirichlet processes can also be modeled as a Dirichlet process with a strength parameter $\sum_{l\in S_{ik}^\textrm{KS}}\alpha_{lk}$ being the sum of the individual parameters and a prior being the weighted sum of the different priors~\cite{DP00,DP01,DP02}. Thus, from the perspective of any cooperative CPC $i$ the distribution $G_{ik}$ of any PU $k$  is modeled using a Dirichlet process that combines the received estimates from the CPCs in $S_{ik}^\textrm{KS}$ into a single prior, as follows:
\begin{equation}\label{eq:dp}
G_{ik} \sim \textrm{DP}\left(\sum_{l\in S_{ik}^\textrm{KS}}\alpha_{lk},\frac{\sum_{l\in S_{ik}^\textrm{KS}}\alpha_{lk}\tilde{H}_{lk}}{\sum_{l\in S_{ik}^\textrm{KS}}\alpha_{lk}}\right),
\end{equation}
where $\tilde{H}_{lk}$ is the non-cooperative kernel estimate of PU $k$ that CPC $i$ received from a CPC $l \in S_{ik}^\textrm{KS}$ and validated using the two-sample KS goodness of fit test. In (\ref{eq:dp}), the combined strength parameter $\sum_{l\in S_{ik}^\textrm{KS}}\alpha_{lk}$ represents the total confidence level (e.g., trust level in the accuracy of this estimation) in using (\ref{eq:dp}) as a nonparametric prior for inferring on the final distribution. Further, the prior $\frac{\sum_{l\in S_{ik}^\textrm{KS}}\alpha_{lk}\tilde{H}_{lk}}{\sum_{l\in S_{ik}^\textrm{KS}}\alpha_{lk}}$ used in (\ref{eq:dp}) represents a weighted sum (convex combination) of the priors received from the coalition partners. In this combined prior, each weight represents the relative confidence level of a certain prior $\tilde{H}_{lk}$ with respect to the total strength parameter level $\sum_{l\in S_{ik}^\textrm{KS}}\alpha_{lk}$. Note that, the control and setting of the parameters $\alpha_{lk}$ in (\ref{eq:dp}) will be discussed in detail later in this section.

Subsequently, for any coalition $S \subseteq \mathcal{N}$, each CPC $i\in S$ can use the Dirichlet process model in (\ref{eq:dp}) in order to compute a new estimate of the distribution of any PU $k \in \mathcal{K}$ that combines, not only CPC $i$'s own data, but also the data received from CPC $i$'s cooperating partners in $S$. Hence, for every CPC $i$ member of a coalition $S$ having its set of observations $\mathcal{L}_{ik}$ and its vector of validated priors $\tilde{\boldsymbol{H}}_k$, using the Dirichlet process model in (\ref{eq:dp}), the predictive distribution on any new observation $\theta^{L_{ik}+1}_{ik}$ conditioned on $\mathcal{L}_{ik}$ with $G_{ik}$ marginalized out can be given by \cite[Eq.~(5)]{DP01}
\begin{equation} \label{eq:dpest}
{H}^S_{ik}(\theta^{L_{ik}+1}_{ik}\in \mathcal{A}|\theta_{ik}^{1},\ldots,\theta_{ik}^{L_{ik}})=\frac{1}{L_{ik}+\sum_{l\in S_{ik}^\textrm{KS}}\alpha_{lk}}\left(\sum_{l\in S_{ik}^\textrm{KS}}\alpha_{lk}\tilde{H}_{lk}(\mathcal{A}) + \sum_{l=1}^{L_{ik}}\delta_{\theta_{ik}^l}(\mathcal{A}) \right),
\end{equation}
where $\mathcal{A} \subseteq \Theta$. The first term in the parentheses on the right hand side of (\ref{eq:dpest}) represents the contribution of the priors to the distribution ${H}^S_{ik}$ while the second term represents CPC $i$'s kernel estimate based on the observations $\mathcal{L}_{ik}$. Equation (\ref{eq:dpest}) can be re-arranged as follows:
\begin{equation} \label{eq:dpest2}
{H}^S_{ik}(\theta^{L_{ik}+1}_{ik}\in \mathcal{A}|\theta_{ik}^{1},\ldots,\theta_{ik}^{L_{ik}})=\sum_{l\in S_{ik}^\textrm{KS}}w_l\tilde{H}_{lk}(\mathcal{A}) + w_0 \tilde{H}_{ik}(\mathcal{A}),
\end{equation}
where $w_0=\frac{L_{ik}}{\sum_{l\in S_{ik}^\textrm{KS}}\alpha_{lk} + L_{ik}}$ is a weight that quantifies the contribution of CPC $i$'s own kernel estimate in the predictive distribution ${H}^S_{ik}$ and $w_l=\frac{\alpha_{lk}}{\sum_{l\in S_{ik}^\textrm{KS}}\alpha_{lk} + L_{ik}}$ represent weights that identify the strength or impact of the contribution of the priors $\tilde{H}_{lk}$ in the final distribution ${H}^S_{ik}$. The resulting posterior distribution in (\ref{eq:dpest2}) is composed mainly of two terms: a first term related to the received estimates and a second term related to the contribution of CPC $i$'s own observations. The first term in (\ref{eq:dpest2}) reflects the impact of the validated priors received by $i$ (on PU $k$) from the members in $S$ over the final resulting estimate. The second term in (\ref{eq:dpest2}) highlights the contribution of CPC $i$'s own perception of the PU activity. In essence, the weights of both terms in (\ref{eq:dpest2}) are proportional to the length of the observations $L_{ik}$ and to the strength parameters $\alpha_{lk},\ l\in S_{ik}^\textrm{KS}$, of the combined estimate.

One can clearly see from (\ref{eq:dpest2}) that the parameters $\alpha_{lk},\ \forall l \in S_{ik}^\textrm{KS}$, allow the CPCs to control the effect of each prior (as well as the own estimate) on the resulting distribution ${H}^S_{ik}$, depending on the properties of each prior and the confidence that each CPC has in this prior. For example, a CPC $i$ can set the Dirichlet process parameters so as to assign weights of zero to priors that failed the two-sample KS test (i.e., priors outside $S_{ik}^\textrm{KS}$) and, then, it can still use (\ref{eq:dpest2}) for predicting the resulting distribution. Moreover, from (\ref{eq:dpest2}), one can clearly see that by setting all weights $w_l=0,\ \forall l \in S_{ik}^\textrm{KS}$, we obtain the original non-cooperative distribution as estimated by CPC $i$ when acting on its own, i.e., the kernel density estimate of (\ref{eq:discnc}).

In practice, each CPC has an incentive to give a higher weight to priors that were generated out of a larger number of observations, as such priors are more trusted. Hence, in this work, we allow each cooperative CPC $i \in \mathcal{N}$ to set the parameters  $\alpha_{lk},\ \forall l \in S_{ik}^\textrm{KS}$ such that the corresponding weights in (\ref{eq:dpest2}) are proportional to the number of observations, i.e.,
\begin{align}\label{eq:div}
w_0 = \frac{L_{ik}}{\sum_{j \in S}L_{jk}}
\textrm{ and } w_l = \frac{L_{lk}}{\sum_{j \in S}L_{jk}}, \ \forall l \in S_{ik}^\textrm{KS}.
\end{align}
Clearly, by using the definition of the weights as highlighted in (\ref{eq:dpest2}), each CPC $i$ can compute the parameters $\alpha_{lk},\ \forall l \in S_{ik}^\textrm{KS}$ from (\ref{eq:div}).

%In summary, for any given coalition $S$, each CPC $i \in S$ is able to determine the valid priors using the two-sample KS test, and, then, it can combine the validated priors using a Dirichlet process approach, in order to derive a predictive (posterior) distribution as per (\ref{eq:dpest2}). The next step is to define a payoff function that quantifies any potential gains that the CPCs achieve, given the estimated distributions as per (\ref{eq:dpest2}) (including the non-cooperative case).

We note that, in the model studied so far, we assumed that the CPCs have no knowledge of the PU's locations and/or capabilities (beyond a limited number of observations). However, the proposed model can easily extend to the case in which each CPC $i\in\mathcal{N}$ has additional information about the PUs. For example, whenever the CPCs have their own measures of the PUs' locations, they can convert this knowledge into an additional prior that can be combined with the final estimate in (\ref{eq:dpest}) so as to improve the accuracy of the learning process.

\subsection{Utility Function}
Given any coalition $S\subseteq \mathcal{N}$ we define, for every CPC $i \in S$ and for every PU $k\in \mathcal{K}$, the following metric as a measure of the utility yielded from a given estimate of the distribution of PU $k$:
\begin{equation}\label{eq:costk}
u_{ik}(S) = -   \rho\left({H}^S_{ik}(\theta^{L_{ik}+1}_{ik}|\theta_{ik}^{1},\ldots,\theta_{ik}^{L_{ik}}),{H}^S_{ik}(\theta^{(1+\Delta_{ik})L_{ik}+1}_{ik}|\theta_{ik}^{1},\ldots,\theta_{ik}^{(1+\Delta_{ik})L_{ik}})\right),
\end{equation}
where ${H}^S_{ik}$ is given by (\ref{eq:dpest2}), $0 < \Delta_{ik} \le 1$ is a real number, and $\rho(P,Q)$ is the Kullback-–Leibler distance between two probability distributions $P$ and $Q$, given by~\cite{TC00}:
\begin{equation}\label{eq:KL}
\rho(P,Q)= \int_{-\infty}^{\infty}P(x)\log{\frac{P(x)}{Q(x)}}\textrm{d}x,
\end{equation}
where the log is taken as the natural logarithm, hence, yielding a KL distance in \emph{nats}\footnote{Alternatively, one can use a base~$2$ logarithm to get the results in bits.}. The KL distance in (\ref{eq:KL}) is a nonsymmetric measure of the difference between two probability distributions $P$ and $Q$. In a communications environment, the KL distance can be interpreted as the expected number of additional bits needed to code samples drawn from $P$ when using a code based on $Q$ rather than based on $P$~\cite{TC00}. Note that the minus sign is inserted in (\ref{eq:costk}) for convenience in order to turn the problem into a utility maximization problem (rather than cost minimization).

For the proposed game, the utility in (\ref{eq:costk}) measures, using (\ref{eq:KL}), the distance between an estimate of the distribution of PU $k$ when CPC $i$ computes this distribution using $L_{ik}$ observations and an estimate of the distribution of PU $k$ when CPC $i$ uses an extra $\Delta_{ik}L_{ik}$ set of observations to find the estimate. The extra observations $\Delta_{ik}L_{ik}$ can be either observations generated and saved at the beginning of CPC $i$'s operation or newly observed samples. The rationale behind (\ref{eq:costk}) is that, as the estimate of the distribution becomes closer to the real unknown distribution, the KL distance in (\ref{eq:costk}) would decrease, since adding a few more observations to an already exact estimate would yield a little change in this estimate. Hence, as the accuracy of the estimate ${H}^S_{ik}$ improves, the KL distance in (\ref{eq:costk}) would decrease, since the extra $\Delta_{ik}L_{ik}$ observations have a smaller impact on the overall distribution. This method is analogous to iterative techniques used in several statistical domains in which one would stop iterating after observing that a few more iterations have little impact on the final result. As a result, the objective of each CPC $i \in \mathcal{N}$ is to cooperate and join a coalition $S$ so as to maximize (\ref{eq:costk}) by reducing the KL distance $\rho({H}^S_{ik}(\theta^{L_{ik}+1}_{ik}|\cdot),{H}^S_{ik}(\theta^{(1+\Delta_{ik})L_{ik}+1}_{ik}|\cdot))$, on every PU channel $k$. It is interesting to note that (\ref{eq:costk}) allows the CPCs to evaluate the validity of their distribution estimates without requiring \emph{any knowledge} on the actual or  real distribution of the PU.

While cooperation allows the CPCs to improve the estimates of the distributions as per (\ref{eq:dpest2}) and (\ref{eq:costk}), these gains are limited by inherent costs that accompany any cooperative process. These extra costs can be captured by a cost function $c(S)$ which will limit the gains from cooperation obtained in (\ref{eq:costk}). Consequently, for every CPC $i$ member of a coalition $S$, we define the following utility or payoff function that captures both the costs and benefits from cooperation:
\begin{equation}\label{eq:util}
\phi_i(S) = \sum_{k\in \mathcal{K}} u_{ik}(S) - c(S),
\end{equation}
where $u_{ik}(S)$ is given by (\ref{eq:costk}) and $\phi(\emptyset)=0$. The first term in (\ref{eq:util}) represents the sum of KL distances over all PU channels $k \in \mathcal{K}$, as given in (\ref{eq:costk}), while the second term represents the cost for cooperation. Although the analysis done in the remainder of this paper can be applied for any type of cost functions, hereinafter, we consider a cost function that varies linearly with the coalition size, i.e.,
\begin{equation}\label{eq:cfunc}
c(S) = \begin{cases} \kappa  \cdot (|S|-1), & \mbox{if } |S| > 1,\\
0, & \mbox{otherwise,}\end{cases}
\end{equation}
with $0 < \kappa \le1$ representing a pricing factor.  The motivation behind the function in (\ref{eq:cfunc}) is that, in order to perform joint estimation of the PUs' activity, the CPCs that are members of a single coalition $S$ must be able to synchronize their communication and maintain an open channel among themselves to exchange information, share their different observations over time, update the priors of one another, and so on. This synchronization and coordination cost is, indeed, an increasing function of the coalition size such as in (\ref{eq:cfunc}).  The nature of this cost is dependent on the implementation and technology of the CPC network (e.g., whether it is wired, wireless, or heterogeneous). In essence, the linear overhead model would adequately capture most implementations since our approach requires that each CPC $i \in S$ provide a small amount of data. For example, if the nodes are exchanging the data over a wireless channel, each CPC $i \in S$ can broadcast its signalling packet once, to the farthest CPC in $S$. Due to the broadcast nature of the wireless channel, all other members of $S$ would receive this packet. In this case, each CPC $i \in S$ needs simply to wait for the data of the other $|S|-1$ coalition members, and this number would increase linearly with $S$. A similar reasoning can be applied for a wired exchange of data using multicast. Nevertheless, we note that, while the choice of a linear overhead in (\ref{eq:cfunc}) is well-justified, the proposed approach can handle any other models for communication overhead. Note that, when $S$ is a singleton (\ref{eq:util}) would highlight the non-cooperative utility of the CPC in $S$ with no cost. Given (\ref{eq:util}), we can now formally characterize the coalitional game between the CPCs:

\begin{proposition}\label{prop:form}
The proposed CPC cooperation problem is modeled as a coalitional game with non-transferable utility $(\mathcal{N},V)$ in which $\mathcal{N}$ is the set of CPCs and $V(S)$ is a singleton set (hence, closed and convex) that assigns for every coalition $S$ a \emph{single} payoff vector $\boldsymbol{\phi}$ whose elements $\phi_i(S)$ are given by (\ref{eq:util}).
\end{proposition}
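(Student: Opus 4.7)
The plan is to verify that the tuple $(\mathcal{N}, V)$ satisfies the three structural requirements of the NTU coalitional game as stated in the definition: a player set, a coalition-indexed mapping into payoff vectors, and closedness/convexity of each $V(S)$ in $\mathbb{R}^S$. Since the definition is largely structural, the proof reduces to (i) identifying the players with the CPC nodes, (ii) constructing $V(S)$ as a singleton whose unique element is the payoff vector $\boldsymbol{\phi}(S)$ with components in \eqref{eq:util}, and (iii) verifying closedness and convexity.

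First, I would set $\mathcal{N}$ to be the set of CPC nodes, which is exactly the player set appearing in the mapping from Section~\ref{sec:sysmodel}. Next, for every $S \subseteq \mathcal{N}$, I would argue that the payoff vector $\boldsymbol{\phi}(S) = (\phi_i(S))_{i \in S}$ is \emph{uniquely} determined by the inputs available to the coalition: namely, the observation sets $\{\mathcal{L}_{ik}\}_{i\in S, k\in \mathcal{K}}$ and the pricing factor $\kappa$. This requires tracing through the construction: (a) the non-cooperative kernel estimates $\tilde{H}_{ik}$ are uniquely obtained from $\mathcal{L}_{ik}$ via the chosen kernel density method, (b) the validated prior sets $S_{ik}^{\textrm{KS}}$ are uniquely determined by the two-sample KS test at the fixed significance level $\eta$, (c) the Dirichlet concentration parameters $\alpha_{lk}$ and weights $w_l$ are uniquely pinned down by \eqref{eq:div}, (d) the posterior ${H}^S_{ik}$ is then the unique convex combination in \eqref{eq:dpest2}, and (e) the KL distance in \eqref{eq:costk} and the cost $c(S)$ in \eqref{eq:cfunc} are deterministic functionals. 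Hence $\phi_i(S)$ is well-defined for each $i \in S$, and $V(S) := \{\boldsymbol{\phi}(S)\}$ is a singleton subset of $\mathbb{R}^S$.

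Finally, I would observe that any singleton in $\mathbb{R}^S$ is automatically closed (its complement is open, being a union of open half-spaces) and convex (a convex combination of a point with itself is the point itself). Consequently $V(S)$ satisfies the closedness and convexity requirements of the definition, completing the verification that $(\mathcal{N},V)$ is an NTU coalitional game.

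The main delicate point, and what I would flag as the one non-trivial step, is the uniqueness in stage (b): the KS test outcome is deterministic once the sample sets $\mathcal{H}_{ik}$ are fixed, but these samples may themselves be drawn from the continuous kernel estimate. I would handle this by noting that the text permits taking $\mathcal{H}_{ik} = \mathcal{L}_{ik}$ (the original observations), so the mapping from observations to $S_{ik}^{\textrm{KS}}$ is deterministic. With this convention, every subsequent quantity is a deterministic function of the data, so $V(S)$ is genuinely a singleton rather than a random set, and the proposition follows.
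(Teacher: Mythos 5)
Your proof is correct and matches the paper's (implicit) argument: the paper offers no formal proof, treating the proposition as immediate from the construction, since $V(S)$ is by definition the singleton containing the payoff vector built from \eqref{eq:util}, and a singleton in $\mathbb{R}^S$ is trivially closed and convex. Your additional care in tracing determinism through the kernel estimation, KS validation, and Dirichlet-process steps (and fixing $\mathcal{H}_{ik}=\mathcal{L}_{ik}$ to make the KS outcome deterministic) is a sensible clarification of a point the paper leaves tacit, and does not change the substance of the argument.
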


Unlike in classical coalitional games, the formation of a single grand coalition encompassing all the CPCs is not guaranteed due to the cooperation costs, as seen in (\ref{eq:util}). In fact, the different CPCs can have their own distinct view of the PUs' activity, and, hence, these CPCs may have no benefit in cooperation. As a result, the proposed CPC coalitional game is classified as a coalition formation game~\cite{WS00} in which the objective is to develop an algorithm that enables the CPCs to cooperate and form coalitions such as in Figure~\ref{fig:ill}.

Note that we are interested in coalition formation games in which the outcome is a set of disjoint CPC coalitions as in Figure~\ref{fig:ill}. The motivation for having disjoint coalitions is two-fold. On the one hand, in a cognitive network, neighboring CPCs having similar views on the PUs' activity would have an incentive to cooperate, and, often, these groups would be disjoint from other groups that have a different PU view (e.g., due to different locations). Moreover, forming disjoint coalitions enables one to exploit significant gains from cooperative estimation as shown later in this paper, while avoiding the significant overhead and complexity associated with having each CPC belong to multiple coalitions. On the other hand, finding low-complexity solutions for coalition formation games with non-disjoint, overlapping coalitions remains an open problem that is, recently, a subject of considerable research in the game theory community~\cite{DR00,Game_theory2,OV01,OV02}. The main reason is that performing coalition formation with multiple membership yields a combinatorial complexity order due to the need for distributing the capabilities of a user among multiple coalitions. In a cognitive network, this translates into a significant degree of complexity for locating and forming coalitions. In a nutshell, forming disjoint coalitions enables one to devise a coalition formation process that optimizes the tradeoff between benefits from cooperation and the accompanying complexity. %Thus, throughout the remainder of this paper we focus on forming disjoint coalitions.

%\begin{figure}[t]
%\begin{center}
%\includegraphics[width=\columnwidth]{ill.eps}
%\end{center}\vspace{-0.6cm}
%\caption {An illustrative example for $2$ BSs and $3$ channels.} \label{fig:ill}\vspace{-0.85cm}
%\end{figure}

\section{A Distributed Coalition Formation Algorithm}\label{sec:algo}
The CPC coalitional game formulated in Proposition~\ref{prop:form} can be modeled using \emph{hedonic coalition formation games}~\cite{WS00,HC00,HC01} which are a class of coalition formation games in which: (h1)- The players' payoffs depend \emph{only} on the identity of the members in each player's coalition, and (h2)- The formation of coalitions results from a set of \emph{preferences} that the players build over their potential set of coalitions.

By looking at (\ref{eq:util}), we can see that the payoff of any CPC $i \in S$ is dependent solely on the identity of the CPCs of coalition $S$, since the behavior of the CPCs outside $S$, i.e., in $\mathcal{N} \setminus S$, does not impact the utility achieved by the members of $S$ as per (\ref{eq:util}). Thus, the proposed game verifies the first hedonic condition, i.e., condition (h1). In order to cast our game into a hedonic coalition formation game, we need to define preferences for the CPCs, over their possible coalitions.  To do so, it is useful to define the concept of a preference relation or order as follows \cite{HC00}:
\begin{definition}
For any CPC $i\in \mathcal{N}$, a \emph{preference relation} or \emph{order} $\succeq_i$ is a complete, reflexive, and transitive binary relation over the set of all coalitions that CPC $i$ can possibly belong to, i.e., the set $\{S_k \subseteq \mathcal{N} : i \in S_k\}$.
\end{definition}

Hence, for any CPC $i \in \mathcal{N}$, given two coalitions $S_1 \subseteq \mathcal{N}$ and, $S_2 \subseteq \mathcal{N}$ such that $i \in S_1$ and $i \in S_2$, the preference relation $S_1\succeq_i S_2$ implies that CPC $i$ prefers to join coalition $S_1$ rather than coalition $S_2$, or is indifferent between $S_1$ and $S_2$. When using the asymmetric counterpart  $\succ_i$ of $\succeq_i$, $S_1 \succ_i S_2$ implies that CPC $i$ \emph{strictly} prefers joining $S_1$ rather than $S_2$. For the proposed model, the preferences of each CPC must capture this CPC's two, often conflicting, objectives: (i)- Maximize its own individual benefit as quantified by (\ref{eq:util}) and (ii)- Ensure that the overall network benefit, i.e., the social welfare, is maintained at a reasonable level. Inherently, this implies that although the CPCs are mainly interested in optimizing their own utilities as per (\ref{eq:util}), they are also required to ensure that the network's overall performance, i.e., the estimation of the PUs' activity in the whole network, stays at an acceptable level. This second objective is motivated by the fact that, if each CPC acts completely selfishly, the possibility of having inaccurate estimates propagating in the network can increase, which can potentially lead to increased interference to the PUs which can decide, for example, to take specific action against the concerned SUs (e.g., stop them from using the spectrum). This increased interference due to inaccurate estimation may become a detrimental effect that can propagate to all of the cognitive network. Moreover, the CPCs are often owned by the same cognitive network operator, and, thus, they have an incentive not only to improve their own benefit but also the overall network's social welfare. Hence, we propose the following preference relation for any CPC $i\in \mathcal{N}$:
\begin{align}\label{eq:prefsu}
S_1 \succeq_i S_2 \Leftrightarrow q_i(S_1) \ge q_i(S_2) \textrm{ and } v(S_1) + v(S_2\setminus \{i\}) > v(S_1\setminus \{i\}) + v(S_2),
\end{align}
where $S_1,\ S_2 \subseteq \mathcal{N}$, are any two coalitions containing CPC $i$, i.e., $i \in S_1$ and $i \in S_2$, $v(S) = \sum_{j \in S}\phi_j(S)$ is the total utility generated by any coalition $S$, and $q_i:2^{\mathcal{N}}\rightarrow \mathbb{R}$ is a preference function defined as follows:

\begin{align}\label{eq:pref1}
q_i(S) = \begin{cases} \phi_i(S), & \mbox{if } (\phi_j(S) \ge \phi_j(S \setminus \{i\}),\forall j \in S\setminus\{i\}  \\ -\infty, &\mbox{otherwise}, \end{cases}
\end{align}
where $\phi_i(S)$ is the payoff of a CPC $i$ as given by (\ref{eq:util}). The preference function in (\ref{eq:pref1}) implies that the preference value that a CPC assigns to a certain coalition $S$ is equal to the payoff that $i$ achieves in $S$, if the payoffs of the CPCs in $S\setminus \{i\}$ do not decrease when $i$ cooperates with them. Alternatively, the preference value is set to $-\infty$ to convey the fact that if, by being part of a coalition $S$, a CPC $i$ decreases any of the payoffs of the other coalition members in $S\setminus \{i\}$, then,  CPC $i$ will be \emph{rejected} by the members of $S\setminus \{i\}$ and, hence, these players will decide not to form coalition $S$. The usefulness of (\ref{eq:pref1}) in capturing the CPCs objectives will become clearer as we define the following rule that will be subsequently used for developing a coalition formation algorithm:
\begin{definition}\label{def:switch}
\textbf{Join Coalition Rule -} Given a network partition $\Pi=\{S_1,\ldots,S_M\}$ of the CPCs' set $\mathcal{N}$, a CPC $i$ chooses to move from its current coalition $S_m,\ $ for some $m \in \{1,\ldots,M\}$, and \emph{join} a different coalition $S_k \in \Pi \cup \{\emptyset\},\ S_k \neq S_m$, hence forming $\Pi^{\prime} = \{\Pi \setminus \{S_m,S_k\}\} \cup \{S_m\setminus\{i\},S_k\cup\{i\}\}$, if and only if $S_k \cup \{i\} \succ_i S_m,\Pi$. Hence, $\{S_m,S_k\} \rightarrow \{S_m\setminus\{i\},S_k\cup\{i\}\}$ and $\Pi \rightarrow \Pi^{\prime}$. The strict preference $\succ_i$ is formally given by (\ref{eq:prefsu}) with strict inequality.
\end{definition}

The join coalition rule enables every CPC to autonomously decide whether or not to leave its current coalition $S_m$ and join another coalition $S_k \in \Pi$, as long as $S_k \cup \{i\} \succ_i S_m$ as per (\ref{eq:prefsu}). Based on (\ref{eq:prefsu}), a CPC would move to a new coalition if this move can strictly improve its individual payoff and increase the overall utility generated by the two involved coalitions \emph{without} decreasing the payoff of any member of the newly joined coalition (given the \emph{approval} of these other members) as per (\ref{eq:prefsu}) and (\ref{eq:pref1}). % Moreover, a join coalition operation occurs not only if it improves the payoff of the concerned CPC, but also if it guarantees that the overall network's utility also increases as captured by (\ref{eq:pref1}). Hence, a join coalition rule capture the previously mentioned two objectives of any CPC $i\in \mathcal{N}$.

Subsequently, we develop a coalition formation algorithm consisting of three main phases: PUs monitoring phase, distributed coalition formation phase, and cooperative Bayesian nonparametric estimation phase. In the first phase, before any cooperation occurs, each CPC monitors the activity of all the PUs in its area and records a limited number of observations. Based on these observations, the CPCs use kernel density estimation techniques to generate a non-cooperative estimate on the distributions of the PUs' activity. Once the PUs monitoring phase is complete, the CPCs begin exploring their neighbors in order to identify potential cooperation possibilities. Thus, the distributed coalition formation phase of the algorithm begins. In this phase, the CPCs attempt to identify potential join operations by participating in pairwise negotiations with CPCs (or coalitions of CPCs) in their neighborhood. As soon as a CPC identifies a join operation, based on (\ref{eq:prefsu}), it can decide, in a distributed manner, to switch to the more preferred coalition. We assume that in the coalition formation phase, the CPCs perform their join operations in an arbitrary yet sequential order. This order, in general, depends on the time during which a given CPC requests to perform a join operation. In essence, performing a join operation implies that a CPC leaves its current coalition and coordinates the joining of the new, preferred coalition. The members of the new coalition must give their consent on the joining of every new CPC as captured by (\ref{eq:prefsu}) and (\ref{eq:pref1}). The distributed coalition formation phase is guaranteed to converge to a final partition, as follows:
\begin{theorem}\label{th:one}
 Consider any initial partition $\Pi_{\text{init}}$ that is in place in the cognitive network. The distributed coalition formation phase of the proposed CPC cooperation algorithm will always converge to a final network partition $\Pi_\textrm{final}$ that consists of a number of disjoint coalitions, irrespective of the initial partition $\Pi_{\text{init}}$ .
\end{theorem}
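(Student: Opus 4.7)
The plan is to establish convergence by exhibiting a strictly increasing potential function on the finite set of partitions of $\mathcal{N}$, so that the iteration of join operations cannot cycle and must therefore stop. First I would verify the easy structural fact: since every join operation sends a partition $\Pi=\{S_1,\ldots,S_M\}$ to $\Pi' = \{\Pi \setminus \{S_m,S_k\}\} \cup \{S_m\setminus\{i\},S_k\cup\{i\}\}$, the output is again a collection of pairwise disjoint subsets whose union is $\mathcal{N}$ (empty blocks being discarded). Hence every intermediate configuration, and in particular $\Pi_\textrm{final}$, is a partition into disjoint coalitions, regardless of $\Pi_\textrm{init}$.

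The heart of the argument is to choose the social welfare
\begin{equation*}
\Phi(\Pi)\;=\;\sum_{S\in\Pi} v(S)\;=\;\sum_{S\in\Pi}\sum_{j\in S}\phi_j(S)
\end{equation*}
as a Lyapunov/potential function on the (finite) lattice of partitions of $\mathcal{N}$. Consider a join operation in which CPC $i$ leaves $S_m$ and joins $S_k$, which by Definition~\ref{def:switch} requires the strict preference $S_k\cup\{i\}\succ_i S_m$. Unpacking the strict form of (\ref{eq:prefsu}) with $S_1=S_k\cup\{i\}$ and $S_2=S_m$ yields
\begin{equation*}
v(S_k\cup\{i\})+v(S_m\setminus\{i\})\;>\;v(S_k)+v(S_m).
\end{equation*}
Since every other coalition in $\Pi$ is untouched by the move, the difference $\Phi(\Pi')-\Phi(\Pi)$ equals exactly the left-hand side minus the right-hand side above, which is strictly positive. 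Therefore $\Phi$ strictly increases along the trajectory of partitions produced by the algorithm.

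Now the number of partitions of the finite set $\mathcal{N}$ is the Bell number $B_{|\mathcal{N}|}$, which is finite, and $\Phi$ takes at most one value on each partition. Strict monotonicity of $\Phi$ therefore precludes revisiting any previously encountered partition, so the sequence of partitions has no cycles and, being confined to a finite set, must terminate after a finite number of join operations at some $\Pi_\textrm{final}$. By construction $\Pi_\textrm{final}$ is a partition of $\mathcal{N}$, i.e., a collection of disjoint coalitions, and the reasoning makes no reference to $\Pi_\textrm{init}$, establishing the claim.

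The only mildly subtle point in this plan is to make sure the algebraic rearrangement of (\ref{eq:prefsu}) really yields strict increase of $\Phi$ on \emph{all} the affected blocks simultaneously; this is why the preference in (\ref{eq:prefsu}) was deliberately stated as a joint condition on $v(S_1)+v(S_2\setminus\{i\})$ versus $v(S_1\setminus\{i\})+v(S_2)$ rather than merely on $\phi_i$. Once that is spelled out, everything else is a finite-state monotonicity argument and no further obstacle arises. (Note that Nash-stability of $\Pi_\textrm{final}$ — if one wishes to go further — follows for free: at termination no CPC can execute a strictly preferred join, which is exactly the statement that no unilateral deviation satisfying (\ref{eq:prefsu}) exists.)
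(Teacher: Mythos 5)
Your proof is correct and follows essentially the same route as the paper: both arguments use the total social welfare $\sum_{S\in\Pi}v(S)$ as a strictly increasing potential over the sequence of join operations (via the second condition in (\ref{eq:prefsu}) and the hedonic property that untouched coalitions keep their payoffs), and both conclude convergence from the finiteness of the number of partitions of $\mathcal{N}$ (the Bell number). Your explicit remarks that each intermediate configuration remains a partition into disjoint blocks and that strict monotonicity precludes revisiting any partition are slightly more careful than the paper's write-up, but they do not change the substance of the argument.
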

\begin{proof}
Given any initial network partition $\Pi_{\textrm{init}}$, the proposed coalition formation process can be mapped into a sequence of join operations performed by the CPC and which transform the network's partition as follows (as an example):
\begin{align}\label{eq:seq}
\Pi_0=\Pi_{\textrm{init}} \rightarrow \Pi_1 \rightarrow \Pi_2\rightarrow\ldots,
\end{align}
where $\Pi_l=\{S_1,\ldots,S_{M}\}$ is a partition composed of $M$ coalitions that emerges after the occurrence of $l$ join operations. As per (\ref{eq:prefsu}), every join operation performed by a CPC that moves from a coalition $S_1 \in \Pi_{l-1}$ to a coalition $S_2\in \Pi_{l}$, yields:
 \begin{align}\label{eq:ll}
\sum_{j\in S_1}\phi_j(S_1 \setminus \{i\}) + \sum_{j\in S_2}\phi_j(S_2 \cup \{i\}) > \sum_{j\in S_1}\phi_j(S_1) + \sum_{j\in S_2}\phi_j(S_2).
 \end{align}

As the proposed game is hedonic, i.e., the payoff of any CPC $i\in S$ depends only on the identity of the members in $S$, (\ref{eq:ll}), implies that any join operation is accompanied by an increase in the overall social welfare of the network, i.e., $\Pi_{l-1} \rightarrow \Pi_l$ yields
\begin{equation}\label{eq:lll}
\sum_{S \in \Pi_l}v(S) > \sum_{S^\prime \in \Pi_{l-1}}v(S^\prime).
\end{equation}
In this context, (\ref{eq:lll}) implies that each join operation $\Pi_{l-1} \rightarrow \Pi_l$ constitutes a transitive and irreflexive order. Given that the number of partitions of the set $\mathcal{N}$ is finite (given by the Bell number \cite{DR00}), then, the sequence in (\ref{eq:seq}) is guaranteed to converge to a final partition $\Pi_{\textrm{final}}$ which completes the proof.\vspace{-0.1cm}
\end{proof}

Following the convergence of the distributed coalition formation phase, the CPCs begin the last phase of the algorithm which is the cooperative Bayesian nonparametric estimation phase. In this phase, the CPCs monitor, periodically, the PUs' activity while continuously communicating with their cooperative partners and performing the three steps described in Section~\ref{sec:hedonic} for cooperative Bayesian nonparametric estimation of the PUs distributions. In this phase, the CPCs will continue to update their own observations on the PUs' activities while coordinating with their coalition partners so as to constantly improve their estimates of the PUs distributions. Consequently,  whenever the CPC detect that the PUs' activity has changed drastically ( e.g., due to mobility of the PUs),  the involved CPCs can assess whether to reengage in the distributed coalition formation phase in order to adapt the network partition to these environmental changes. A summary of the proposed algorithm is given in Table~\ref{tab:coalform}.

\begin{table}[!t]
%\scriptsize
  \centering
  \caption{%\mycaption{%\vspace*{-1em}
    \vspace*{-0.7em}The proposed CPCs coalition formation algorithm}\vspace*{-1em}
    \begin{tabular}{p{13cm}}
      \hline
      % after \\: \hline or \cline{col1-col2} \cline{col3-col4} ...
      \textbf{Starting Network} \vspace*{.5em} \\
      \hspace*{1em} The network is governed by a partition $\Pi_{\textrm{init}}=\{S_1,\ldots,S_M\}$ (initially $\Pi_{\textrm{init}}$ = $\mathcal{N}$ = $\{1,\ldots,N\}$ with non-cooperative CPCs). \vspace*{.1em}\\
\textbf{The algorithm consists of three phases}\vspace*{.5em}\\
\hspace*{1em}\emph{Phase~1 - PUs monitoring phase:}   \vspace*{.1em}\\
\hspace*{2em}a) Each individual CPC discovers its neighboring PUs,\vspace*{.1em}\\
\hspace*{2em}b) Each CPC records, over a period of time, a number of observations $L_{ik}$ regarding the
\hspace*{2em}activity of each PU $k$.\vspace*{.1em}\\
\hspace*{1em}\emph{Phase~2 - Distributed Coalition Formation:}   \vspace*{.1em}\\
\hspace*{3em}\textbf{repeat}\vspace*{.1em}\\
\hspace*{4em}a) Each CPC $i \in \mathcal{N}$ performs pairwise negotiations with its surrounding CPCs,\vspace*{.1em}\\
\hspace*{4em}to identify potential join operations.\vspace*{.1em}\\
\hspace*{4em}b) Each CPC $i\in \mathcal{N}$ can identify a join operation by estimating the utility in (\ref{eq:util})\vspace*{.1em}\\
\hspace*{4em}that results, for every PU $k\in \mathcal{K}$, from joining any potential coalition using the\vspace*{.1em}\\
\hspace*{4em}three steps of Section~\ref{sec:hedonic}:\vspace*{.1em}\\
\hspace*{6em}b.1) CPC $i \in S$ checks the validity of the priors on each PU $k$ using the two-sample\vspace*{.1em}\\
\hspace*{6em}Kolmogorov-–Smirnov test.\vspace*{.1em}\\
 \hspace*{6em}b.2) CPC $i$ combines the priors that verify the two-sample Kolmogorov-–Smirnov\vspace*{.1em}\\
\hspace*{6em}test using a Dirichlet process as in (\ref{eq:dpest2}).\vspace*{.1em}\\
 \hspace*{6em}b.3) CPC $i$ computes its potential utility using the Kullback-–Leibler distance in (\ref{eq:util}).\vspace*{.1em}\\
\hspace*{4em}Once the potential utility is found, CPC $i$ identifies whether a join is possible using (\ref{eq:prefsu}).\vspace*{.1em}\\
\hspace*{4em}If a join coalition operation is possible:\vspace*{.1em}\\
\hspace*{6em}a) CPC $i$ leaves its current coalition $S_m$.\vspace*{.1em}\\
 \hspace*{6em}b) CPC $i$ joins a new coalition $S_k$ with the consent of the members of $S_k$ as guaranteed\vspace*{.1em}\\
 \hspace*{6em}by (\ref{eq:prefsu}).\vspace*{.1em}\\
\hspace*{3em}\textbf{until} guaranteed convergence to a Nash-stable partition $\Pi_{\textrm{final}}$.\vspace*{.1em}\\
\hspace*{1em}\emph{Phase 3 - Cooperative Bayesian Nonparametric Estimation:}   \vspace*{.1em}\\
\hspace{2em} This phase occurs inside every formed coalition $S_m \in \Pi_{\textrm{final}}.$\vspace*{.1em}\\
\hspace*{4em}a) Each CPC continues to periodically monitor the PUs' activity.\vspace*{.1em}\\
\hspace*{4em}b) The CPCs part of a same coalition constantly share their updated distribution estimates.\vspace*{.1em}\\
\hspace*{4em}c) The CPCs in every $S_m$ perform cooperative Bayesian nonparametric estimation of the\vspace*{.1em}\\
\hspace*{4em}PUs distributions using the methods of Section~\ref{sec:hedonic}.\vspace*{.1em}\\
\textbf{Based on the results of Phase~3, the CPCs can decide to repeat, periodically, the coalition formation process to adapt to environmental changes such as changes in PU behavior.} \vspace*{.5em}\\
   \hline
    \end{tabular}\label{tab:coalform}\vspace{-0.5cm}
\end{table}

We can study the stability of any network partition $\Pi_{\textrm{final}}$ resulting from the  distributed coalition formation phase of the algorithm in Table~\ref{tab:coalform}, using the concept of a Nash-stable partition defined as follows~\cite{HC00}:
\begin{definition}
A partition $\Pi = \{S_1,\ldots,S_M\}$ of a set $\mathcal{N}$ is said to be \emph{Nash-stable} if $\forall i \in \mathcal{N} \textrm{ s. t. } i\in S_m, S_m \in \Pi,\  (S_m,\Pi) \succeq_i (S_k \cup \{i\},\Pi^{\prime})$ for all $S_k \in \Pi \cup \{\emptyset\}$ with $\Pi^{\prime} = (\Pi \setminus \{S_m,S_k\} \cup \{S_m\setminus\{i\},S_k\cup\{i\}\})$.
\end{definition}

In other words, a partition $\Pi$ is Nash-stable, if no CPC prefers to leave its current coalition and join another coalition in $\Pi$. For the proposed CPC coalitional game, we have the following result:

\begin{proposition}\label{prop:one}
Any partition $\Pi_f$ resulting from the proposed algorithm in Table~\ref{tab:coalform} is Nash-stable.
\end{proposition}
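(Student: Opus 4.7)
The plan is to proceed by a short contradiction argument built directly on Theorem~\ref{th:one} and the join coalition rule of Definition~\ref{def:switch}. First, I invoke Theorem~\ref{th:one} to assert that Phase~2 of the algorithm terminates at some partition $\Pi_{\textrm{final}}$. The key structural fact is that, by construction of the loop in Table~\ref{tab:coalform}, termination can occur only when no CPC is able to execute a join operation; that is, for every $i \in \mathcal{N}$ with $i \in S_m \in \Pi_{\textrm{final}}$, there is no coalition $S_k \in \Pi_{\textrm{final}} \cup \{\emptyset\}$ with $S_k \cup \{i\} \succ_i S_m$ in the sense of (\ref{eq:prefsu}) and (\ref{eq:pref1}).

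Next, I suppose for contradiction that $\Pi_{\textrm{final}}$ is not Nash-stable. By the definition of Nash-stability, this produces a CPC $i$ in some coalition $S_m \in \Pi_{\textrm{final}}$ and a coalition $S_k \in \Pi_{\textrm{final}} \cup \{\emptyset\}$ such that $(S_m,\Pi_{\textrm{final}}) \not\succeq_i (S_k \cup \{i\},\Pi^{\prime})$, where $\Pi^{\prime}$ is the partition obtained after $i$ moves from $S_m$ to $S_k$. Since $\succeq_i$ is by definition a complete preference relation, the negation of $(S_m,\Pi_{\textrm{final}}) \succeq_i (S_k \cup \{i\},\Pi^{\prime})$ is exactly $(S_k \cup \{i\},\Pi^{\prime}) \succ_i (S_m,\Pi_{\textrm{final}})$.

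This strict preference is precisely the precondition of the Join Coalition Rule in Definition~\ref{def:switch}. Consequently, CPC $i$ would have identified a valid join operation at $\Pi_{\textrm{final}}$, triggering the transition $\Pi_{\textrm{final}} \rightarrow \Pi^{\prime}$ in the sequence (\ref{eq:seq}), and the algorithm would not have stopped at $\Pi_{\textrm{final}}$. This contradicts Theorem~\ref{th:one}, proving that $\Pi_{\textrm{final}}$ must satisfy the Nash-stability condition.

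I do not foresee any significant obstacle: the argument is essentially a tautology that follows from having defined the join rule through the \emph{same} preference order $\succ_i$ that appears in the Nash-stability condition. The one place that deserves explicit attention is the step that converts the negation of $\succeq_i$ into $\succ_i$; this is why I would state up front that $\succeq_i$ is a complete, reflexive, transitive binary relation, so that $\neg (a \succeq_i b)$ implies $b \succ_i a$. Once this is on the table, the result reduces to observing that a terminal partition of the algorithm cannot admit any profitable unilateral deviation that is consented to by the receiving coalition, which is exactly Nash-stability in the present sense.
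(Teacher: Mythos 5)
Your proof is correct and follows essentially the same route as the paper: invoke Theorem~\ref{th:one} for convergence, assume $\Pi_{\textrm{final}}$ is not Nash-stable, deduce that a join operation would still be possible, and contradict termination. The only difference is that you are slightly more careful than the paper in making explicit the step from $\neg\bigl((S_m,\Pi)\succeq_i (S_k\cup\{i\},\Pi^{\prime})\bigr)$ to the strict preference $\succ_i$ via completeness of the preference relation, which the paper's proof takes for granted.
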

\begin{proof}
First,  as shown in Theorem~\ref{th:one}, the distributed coalition formation phase of the proposed algorithm is guaranteed to converge to a partition $\Pi_{\textrm{final}}$. Assume that this partition $\Pi_{\textrm{final}}$ is not Nash-stable, then, there exists a CPC $i \in S_1,\ S_1 \in \Pi_{\textrm{final}}$ and a coalition $S_2\in \Pi_{\textrm{final}}$, such that $S_2 \cup \{i\} \succ_i S_1 \setminus \{i\}$, i.e., a join operation is possible. Such a case contradicts with the the result of Theorem~\ref{th:one} which ensures that no join operations are possible in $\Pi_{\textrm{final}}$. Thus,  $\Pi_{\textrm{final}}$ must be Nash-stable.
\end{proof}

The proposed algorithm in Table~\ref{tab:coalform} can be implemented by the CPCs of a practical cognitive network, in a distributed manner. Each CPC can discover its neighbors and the surrounding PUs (e.g., using the backbone when the CPC nodes are stations, or, otherwise, by using well-known methods such as those in \cite{ND00}). Then, each CPC can start negotiating with its neighboring CPCs, in a pairwise manner, either over the backbone or over a temporary wireless ad hoc control channel, in order to identify potential join operations. Each CPC can, on its own, determine whether a join operation is possible using (\ref{eq:prefsu}), by performing the three steps for cooperation detailed in Section~\ref{sec:hedonic}. These steps are based on standard mathematical methods that require reasonable computation time. Moreover, implementing these steps does not require any knowledge about the real PU distributions. Once a join operation is identified, the CPC would signal to the new coalition its intention to join. The new coalition is guaranteed to accept this request as (\ref{eq:prefsu}) ensures this approval. Hence, the CPCs would interact, performing distributed join decisions, until reaching a Nash-stable partition. Finally, for any partition $\Pi=\{S_1,\ldots,S_M\}$ of $\mathcal{N}$ that is in place in the network, the computational complexity of finding a potential partner, i.e., identifying a join operation, can be easily seen to be $O(|\Pi|)$ and its worst case scenario is when all the CPCs are acting alone in $\Pi$ in which case $|\Pi| = N$.

\section{Simulation Results and Analysis}\label{sec:sim}
For our simulations, we consider a square area of $2$~km $\times$ $2$~km in which the CPC nodes and the PUs are randomly deployed. The model proposed in this paper does not make any assumptions on the distributions of the PUs' activity, and, thus, it can be applied to any such distributions. In the simulations, we will use \emph{beta distributions}~\cite{DP00} with parameters $\beta_{ik}$ and $\gamma_{ik}$  to describe the activity of a PU $k$ as seen by a CPC $i$. These distributions are generated in such a way that each CPC perceives a different distribution depending on its location with respect to the PU. For the considered beta distributions, $\beta_{ik}-1$ and $\gamma_{ik}-1$ can be interpreted as the number of times PU $k$ is observed to be active and inactive, respectively~\cite{DP00}. To generate these distributions, we need to determine values for $\beta_{ik}$ and $\gamma_{ik}$ such as each CPC $i$ would see a different beta distribution, depending on its location with respect to PU $k$. To do so, first, we assume that, in an ideal case, over a period of $B$~consecutive discrete time instants (e.g., time slots), each PU $k$ transmits for a period of $\tau_k B$ such that $0<\tau_k \le 1$ and is idle for the remaining period. Therefore, in the ideal case, the parameters of PU $k$'s beta distribution can be set to $\tau_kB+1$  and $(1-\tau_k)B+1$.

However, even when the PU is transmitting, depending on the path loss and fading, some CPCs might still see this PU as \emph{inactive}, e.g., if it is already far away and, hence, is not likely to interfere with the SUs that these CPCs are serving. Hence, we consider that a PU $k\in \mathcal{K}$ appears to be \emph{active} at a CPC node $i\in \mathcal{N}$  if its received SNR $\nu_{ik}$ at CPC $i$ is above a certain target threshold $\nu_0$. Otherwise, PU $k$ is considered as \emph{inactive} by CPC $i$, even if it is, in fact, transmitting.
Thus, in order to find how the distributions of the PUs' activity appear to each CPC in the network, the parameters of the  beta distributions must be computed as a function of the received SNR $\nu_{ik}$ of the PU signal at any CPC $i \in \mathcal{N}$.

In this respect, assuming Rayleigh faded wireless channels, we denote by $\chi_{ik}=\operatorname{exp}(-\frac{\nu_0}{\nu_{ik}})$ the probability that the average SNR $\nu_{ik}$ as received by CPC $i$ when PU $k$ transmits is larger than a target value $\nu_0$. The average received SNR from PU $k$ to CPC $i$ is given by $\nu_{ik}=\frac{P_kg_{ik}}{\sigma^2}$ where $P_k$ is the transmit power of the PU,  $\sigma^2$ is the variance of the Gaussian noise, and $g_{ik}=\frac{1}{d_{ik}^\mu}$ is the path loss with $\mu$ the path loss exponent and $d_{ik}$ the distance between PU $k$ and CPC $i$. Consequently, the effective number of times that a PU $k$ is seen to be \emph{active} by a CPC $i$ would be given by $\chi_{ik}\tau_k B$. From the perspective of a CPC $i$, in addition to being effectively inactive for a period of $(1-\tau_k)B$, a PU $k$ is also considered to be \emph{inactive} whenever its SNR drops below $\nu_0$, i.e., with a probability of $1-\chi_{ik}$. Hence, to generate different, yet correlated beta distributions that describe the PUs' activity as perceived by each CPC $i\in \mathcal{N}$, we set the parameters of the distributions to  $\beta_{ik}=\chi_{ik}\tau_k B+1$ and $\gamma_{ik}= (1-\tau_k)B+ (1-\chi_{ik})\tau_k B+1$.

It is important to note that: (i)- The above procedure is assumed to be completely \emph{unknown} to the deployed CPCs, i.e., the CPCs have no knowledge on how they view the distributions of the PUs, and (ii)- The results in this section can also be reproduced for any other types of PUs' activity distributions as well as for other methods for generating these distributions as the proposed model is distribution-independent.

The parameters of the simulations are consequently set as follows. The number of PUs is set to $K=4$ and the pricing factor $\kappa=10^{-3}$, unless stated otherwise. We let $\Delta_{ik}=0.5,\ \forall i\in \mathcal{N}, k \in \mathcal{K}$. The number of observations $L_{ik}$ for a CPC $i$ is assumed to be uniformly distributed over the integers in the interval $[5,20]$. The KS significance level is set to a typical value of $\eta=0.05$~\cite{KS00}.  For the generated distributions, we set $B=10$ and let $\tau_k$ be randomly chosen by each PU $k$ using a uniform distribution over  $[0.4,0.9]$. The transmit power of any PU $k \in \mathcal{K}$ is assumed to be $P_k=100$~mW while the path loss exponent, the Gaussian noise, and the target SNR are, respectively, set to $\mu=3$, $\sigma^2=-90$~dBm, and $\nu_0=10$~dB. All statistical results are averaged over the random locations of the CPCs and the PUs.
\begin{figure}[!t]
\begin{center}
\includegraphics[width=10cm]{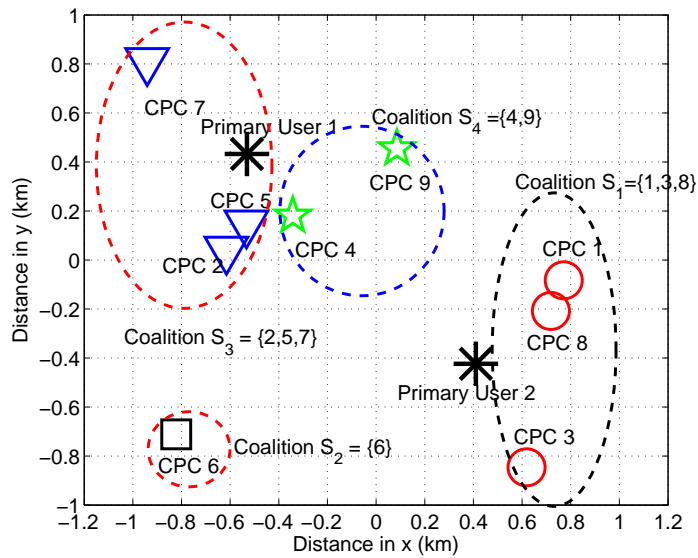}
\end{center}\vspace{-0.8cm}
\caption {A snapshot showing a network partition $\Pi=\{S_1,S_2,S_3,S_4\}$ consisting of $4$ coalitions and resulting from the proposed coalition formation algorithm for a network with $N=9$~CPC nodes and $K=2$~primary users.} \label{fig:snapshot}\vspace{-0.1cm}
\end{figure}

Further, while a body of work (e.g., \cite{Lund,CS00,CS02,CS03,CS04,CS05,CS06,CS07,CS08}) deals with cooperative spectrum sensing techniques, most of this existing work assumes a certain given PU activity distribution. In contrast, in this paper we provide a scheme for learning and estimating the activity distribution of the PUs, from the perspective of a number of cognitive users. In fact, the work done in this paper \emph{complements} cooperative sensing as the distribution of the PUs' activity patterns can serve as an important factor to improve the prediction of cooperative sensing techniques. Thus, given the fundamental difference between the problem solved in this paper and the abundant works on cooperative sensing such as in \cite{CS00,CS02,CS03,CS04,CS05,CS06,CS07,CS08}, a direct comparison of the results is not possible. For instance, to our best knowledge, no existing work has addressed the problem of learning and estimating the statistical distribution of the PUs' activity using cooperative approaches. Thus, we use the commonly used non-cooperative kernel estimation technique as a comparison benchmark. Further, we also compare our results with the real, yet unknown distribution of the PUs' activity so as to provide an additional benchmark for evaluating how close our solution is with respect to an optimal perfect estimate.

\begin{figure}[!t]
\begin{center}
\includegraphics[width=10cm]{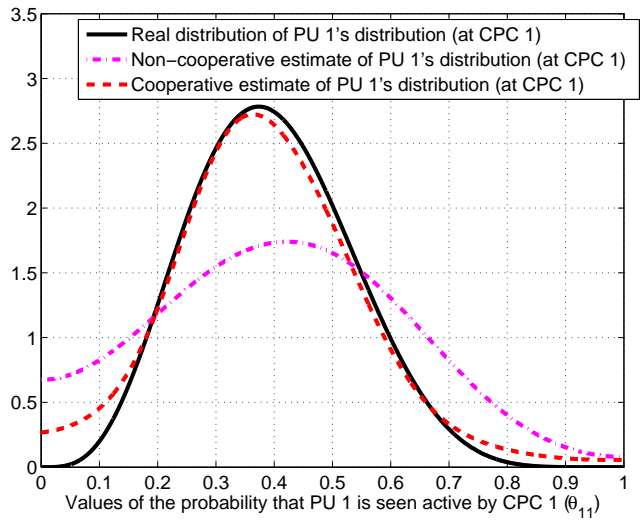}
\end{center}\vspace{-0.8cm}
\caption {The actual real distribution of PU $1$ as seen by CPC node $1$ in the network of Figure~\ref{fig:snapshot} compared to the distributions resulting from the proposed cooperative Bayesian nonparametric approach and from the non-cooperative kernel estimate.} \label{fig:sim}\vspace{-0.1cm}
\end{figure}
In Figure~\ref{fig:snapshot}, we present a snapshot of a partition $\Pi=\{S_1,S_2,S_3,S_4\}$ resulting from the proposed coalition formation game for a randomly generated network having $N=9$~CPC nodes and $K=2$~PUs. Figure~\ref{fig:snapshot} demonstrates how the nodes that are experiencing somewhat similar PUs' activity can decide to form a coalition. For example, consider coalition $S_1$ that consists of CPC nodes $1,3,$ and $8$. In  this coalition, the distribution of PU $1$ is seen by CPC nodes $1,3,8$ as beta distributions with parameters $(\beta_{11}=4.71,\gamma_{11}=7.29),(\beta_{31}=3.72,\gamma_{31}=8.28)$, and $(\beta_{81}=4.6,\gamma_{81}=7.4)$, respectively while the distribution of PU $2$ is seen by CPC nodes $1,3,8$ as beta distributions with parameters $(\beta_{12}=8.21,\gamma_{12}=3.79),(\beta_{32}=8.23,\gamma_{32}=3.77)$, and $(\beta_{82}=8.27,\gamma_{82}=3.73)$. Clearly, CPC nodes $1,3,$ and $8$ have a benefit to use cooperative Bayesian nonparametric estimation to improve their estimate of the distribution of PU $2$ which is seen by all three CPCs with an almost similar distribution (i.e., it passes the KS test for all three CPCs).  However, for PU $1$, although CPCs $1$ and $8$ see a comparable distribution, CPC $3$ has a different view on PU $1$'s activity. In fact, the KS test fails when CPC node $3$ uses it to compare its samples of PU $1$'s distribution to samples from CPCs $1$ or $8$. Nonetheless, all three CPCs find it beneficial to join forces and form a single coalition $S_1$ as it significantly improve their KL distance as per (\ref{eq:util}), on both PUs for CPCs $1$ and $8$, and only on PU $2$ for CPC $3$. Inside $S_1$, CPC $3$ discards the priors received from $1$ and $8$ regarding PU $1$'s distribution and only utilizes the
received priors related to PU $2$ in order to compute its Dirichlet process estimate as in (\ref{eq:dpest2}) for PU $2$. Note that, the partition $\Pi$ in Figure~\ref{fig:snapshot} is clearly Nash-stable as no CPC can improve its utility by switching from its current coalition to another coalition within $\Pi$.

For the network of Figure~\ref{fig:snapshot}, we show, in Figure~\ref{fig:sim}, a plot of the real distribution of PU $1$ as seen by CPC node $1$, compared with the estimates generated from the proposed cooperative Bayesian nonparametric approach and with the non-cooperative kernel estimate. Figure~\ref{fig:sim} clearly shows that, by performing cooperative Bayesian nonparametric estimation, CPC $1$ was able to significantly improve its non-cooperative kernel estimate of PU $1$'s distribution by operating within coalition $S_1$. We note that, in Figure~\ref{fig:snapshot}, the number of non-cooperative observations that CPCs $1,3$, and $8$ record regarding the distribution of PU $1$ are $L_{11}=10,L_{31}=8$ and $L_{81}=20$ observations.  Therefore, Figure~\ref{fig:sim} demonstrates that by using the proposed cooperative Bayesian nonparametric approach while sharing observations (mainly with CPC $8$ in $S_1$), CPC $1$ was able to obtain an almost perfect estimate of PU $1$'s distribution \emph{without any prior knowledge of this distribution} and by using only $L_{11}=10$ own observations. Note that, analogous results can be seen for all CPCs in Figure~\ref{fig:snapshot} as well as for all other simulated networks.

\begin{figure}[!t]
\begin{center}
\includegraphics[width=10cm]{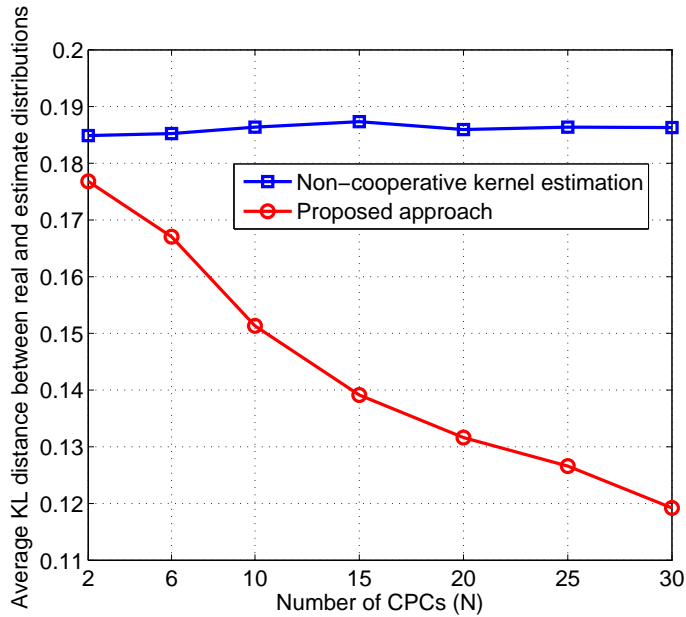}
\end{center}\vspace{-0.8cm}
\caption {Performance assessment showing the average KL distance (per CPC and per PU) between the real distributions (that are unknown to the CPCs) and the estimates generated by the CPCs for a network with $K=4$~PU channels as the number of CPCs $N$ varies.} \label{fig:perf}\vspace{-0.1cm}
\end{figure}
In Figure~\ref{fig:perf}, we assess the performance of the proposed cooperative approach by plotting the average achieved KL distance between the real, yet unknown (by the CPCs), distributions of the PUs and the estimates computed by the CPCs for a network with $K=4$~PUs as the number of CPCs, $N$, varies. This KL distance allows us to assess how accurate the computed estimate is with respect to the actual real PUs' distributions. The results in Figure~\ref{fig:perf} show the average KL distance per CPC and per PU. Figure~\ref{fig:perf} shows that, as the number of CPCs $N$ increases, the  average KL distance between the estimates and the real distributions decreases for the proposed approach and remains comparable for the non-cooperative case. This result demonstrates that, for the proposed approach, as  $N$ increases, the CPCs become more apt to find partners with whom to cooperate and, thus, their performance improves as their estimates become more accurate, i.e., closer to the actual PUs' distributions. Figure~\ref{fig:perf} shows that, at all network sizes, the proposed cooperative approach reduces significantly the KL distance between the real and estimated distributions relative to the non-cooperative case. This performance advantage is increasing with the network size $N$ and reaching up to $36.5\%$  improvement over the non-cooperative kernel estimation scheme at $N=30$~CPCs. Figure~\ref{fig:perf} also shows that our approach allows the average KL distance (average per PU and per CPC) to approach the ideal case of $0$, as more cooperative partners exist in the network, i.e., as the network size $N$ increases.  %In summary, Figure~\ref{fig:perf} demonstrates the effectiveness of the proposed coalition formation approach for cooperative Bayesian nonparametric estimation of the PUs' activity.

\begin{figure}[!t]
\begin{center}
\includegraphics[width=10cm]{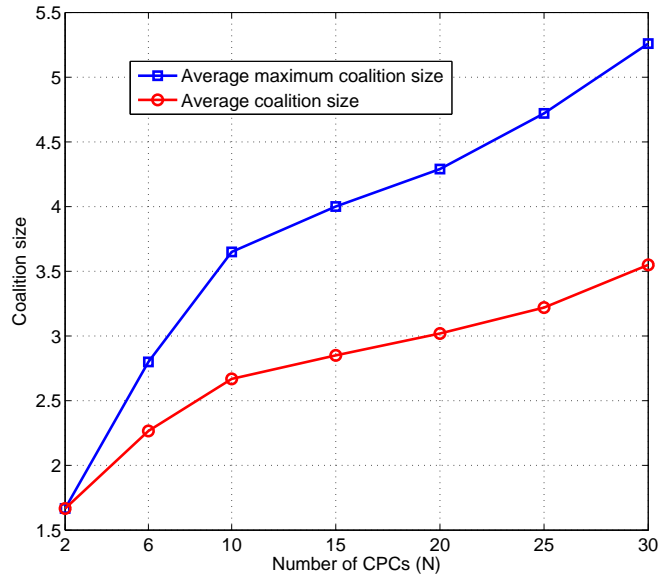}
\end{center}\vspace{-0.8cm}
\caption {Average and average maximum coalition size resulting from the proposed coalition formation algorithm for a network with $K=4$~PU channels as the number of CPCs $N$ varies.} \label{fig:size}\vspace{-0.1cm}
\end{figure}

Figure~\ref{fig:size} shows the average and average maximum coalition size resulting from the proposed algorithm as the number of CPCs, $N$, varies for $K=4$~PUs. In this figure, we can see that both the average and average maximum coalition size are increasing with the network size as cooperation becomes more likely for large networks. From Figure~\ref{fig:size}, we can deduce that the resulting networks are composed of coalitions having small to moderate sizes. In fact,  the average and average maximum coalition size vary from around $1.67$ at $N=2$~CPCs to around $3.5$ and $5.3$, respectively, at $N=30$~CPCs. Hence, Figure~\ref{fig:size} shows that the CPCs self-organize into networks composed of a large number of relatively small coalitions.

\begin{figure}[!t]
\begin{center}
\includegraphics[width=10cm]{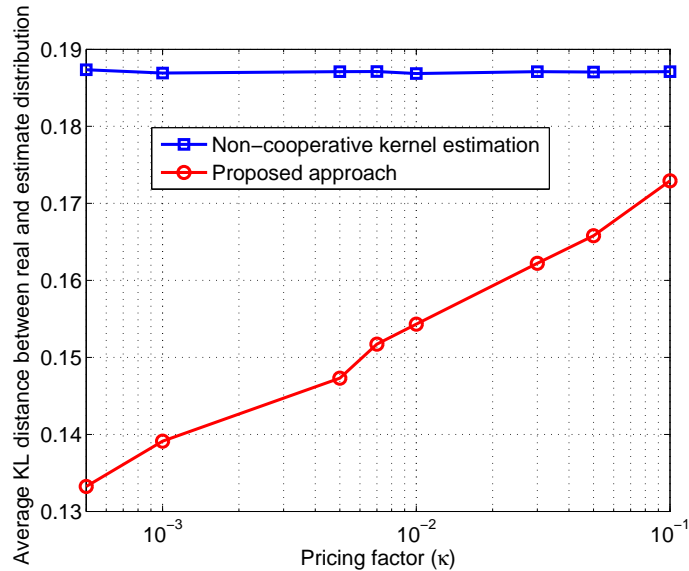}
\end{center}\vspace{-0.8cm}
\caption {Performance assessment showing the average KL distance (per CPC and per PU) between the real distributions and the estimates generated by the CPCs for a network with  $N=15$~CPCs and $K=4$~PU channels as the pricing factor $\kappa$ varies.} \label{fig:price}\vspace{-0.1cm}
\end{figure}

In Figure~\ref{fig:price}, we show the impact of the pricing factor $\kappa$ on the performance of the proposed algorithm in terms of the average KL distance (per CPC and per PU) between the real distributions of the PUs and the estimates computed by the CPCs for a network with $N=15$~CPCs and $K=4$~PUs. Figure~\ref{fig:price} shows that. as the pricing factor $\kappa$ increases, the average KL distance increases as cooperation becomes more costly, hence, limiting the cooperative gains. Nonetheless, Figure~\ref{fig:price} shows that, at all pricing factors, the proposed algorithm maintains a performance advantage relative to the non-cooperative scheme. This advantage, in terms of reduced KL distance with respect to the real PUs' distributions, ranges from around $28.9\%$ at $\kappa=5\cdot10^{-3}$ to about $7.6\%$ at $\kappa=10^{-1}$. The value of $\kappa$ is, in practice, related to the implementation of the network such as the type of backhaul interconnecting the nodes (wired or wireless), the capabilities of the devices, among others. For example, if the PU monitoring is being performed by CPCs connected through a high-speed backhaul, the value of $\kappa$ is expected to be small, e.g., within the range of $10^{-3}$ to $10^{-2}$. In contrast, if the PU activity monitoring is being done by devices with limited capabilities such as cognitive femtocells connected through a DSL backhaul, the cost for information exchange would have a bigger impact and $\kappa$ can have values above $5\%$. We also note that, while the network implementation is the most significant factor in determining $\kappa$, the network operator can use the results of Figure~\ref{fig:price} to have some control over the pricing factor so as to optimize a tradeoff between the potential gains from cooperation and the costs that this cooperation entails, in terms of increased control traffic, communications delay, and overhead. For example, depending on the nature of the CPC nodes' network (e.g., wired or wireless) and their capabilities, the operator can decide to tweak the value of $\kappa$ so as to maintain a certain target QoS requirement during information exchange (e.g., target delay or overhead for signalling) or reserve some backhaul resources for alternate functions.

\begin{figure}[!t]
\begin{center}
\includegraphics[width=10cm]{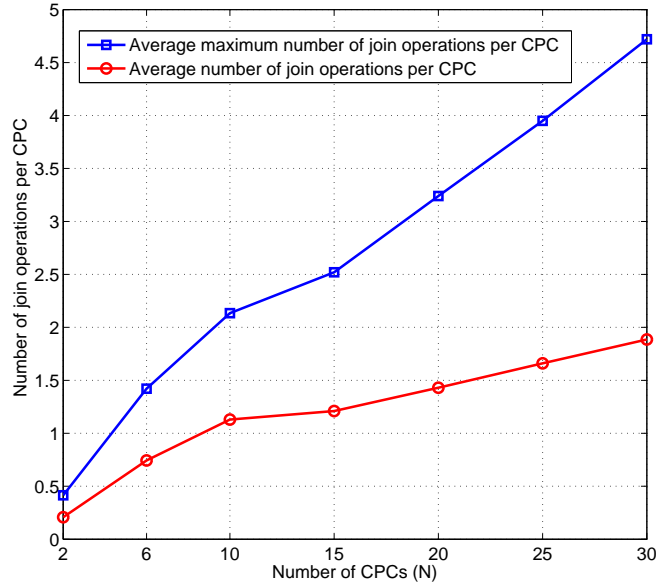}
\end{center}\vspace{-0.8cm}
\caption {Average and average maximum number of join operations attempted per CPC for a network with $K=4$~PU channels as the number of CPCs $N$ varies.} \label{fig:switch}\vspace{-0.1cm}
\end{figure}

In Figure~\ref{fig:switch},  we show the average and average maximum join operations attempted per CPC before convergence of coalition formation as the network size $N$ varies. In Figure~\ref{fig:switch}, we can see that as the number of CPCs $N$ increases, both the average and average maximum number of join operations per CPC increase. This is due to the fact that, as the network size $N$ grows, the possibilities for cooperation increase, and, hence, the coalition formation process yields a larger number of join operations per CPC. Figure~\ref{fig:switch} shows that the average and average maximum number of join operations per CPC vary, respectively, from $0.2$, and $0.4$ at $N=2$~CPCs to $1.9$ and $4.7$ at $N=30$~CPCs. The results in  Figure~\ref{fig:switch} can also be combined with the coalition sizes in Figure~\ref{fig:size} so as to corroborate that the complexity of determining a partner for forming a coalition grows linearly with the size of the network partition in place.

\begin{figure}[!t]
\begin{center}
\includegraphics[width=10cm]{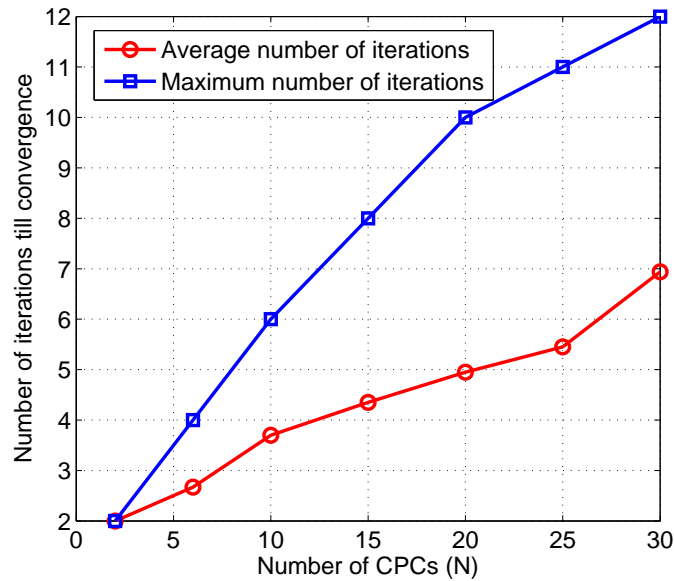}
\end{center}\vspace{-0.8cm}
\caption {Average and maximum number of iterations required till convergence to a Nash-stable partition for a network with $K=4$~PU channels as the number of CPCs $N$ varies.} \label{fig:count}\vspace{-0.1cm}
\end{figure}
The convergence of the algorithm is further assessed in Figure~\ref{fig:count} which shows the average and maximum number of iterations required until convergence to a Nash-stable partition. Each iteration consists of a number of join operations performed by the CPCs. In Figure~\ref{fig:count}, we can see that as the network size $N$ increases, a larger number of iterations is needed for the CPCs to reach a Nash-stable partition. In this respect, the average and maximum number of iterations range from around $2$ at $N=2$~CPCs to $6.94$ and $12$, respectively, at $N=30$~CPCs. Figures~\ref{fig:switch} and \ref{fig:count} clearly show that the proposed algorithm has a low complexity as it enables the CPCs to cooperate, in a distributed manner, while requiring a very reasonable number of iterations and join operations.

To show how the proposed approach can handle changes in the environment, in Figure~\ref{fig:speed}, we plot, as a function of the speed of the PUs, the frequency in terms of average total number of join operations per minute resulting,  over a period of $5$ minutes, from a network with $K=4$~mobile PUs and different number of CPCs $N$. In this figure, the PUs move using a basic random walk mobility model with a constant speed given by the x-axis in Figure~\ref{fig:speed}. Periodically, once the CPCs detect that their view of a certain PU's activity has changed, e.g., due to mobility, they reengage in the coalition formation phase of the algorithm proposed in Table~\ref{tab:coalform}. As a result,  the CPCs may decide to break from their current coalitions and join other coalitions. The increase in the frequency of join operations with the PUs' velocity as seen in Figure~\ref{fig:speed}  is due to the fact that, for more dynamic environments, i.e., higher mobility, the likelihood of the occurrence of join operations increases. Figure~\ref{fig:speed} shows that the average frequency of join operations per minute ranges, respectively for $N=7$~CPCs and $N=15$~CPCs, from $2.2$ and $5.4$~operations per minute at $10$~km/h to about $3.9$ and $8.8$~operations per minute at $100$~km/h. Note that similar results can be seen for other environmental changes such as mobility of CPCs or changes in the PUs transmission pattern $\tau_k$ (for any PU $k$).
\begin{figure}[!t]
\begin{center}
\includegraphics[width=10cm]{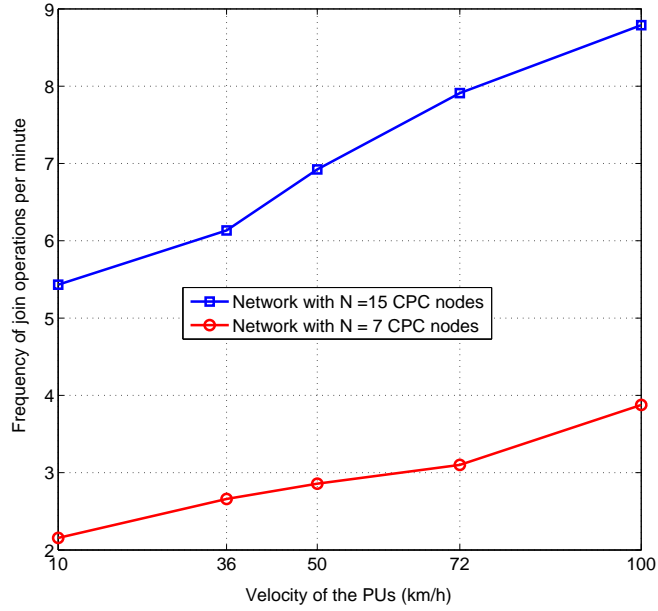}
\end{center}\vspace{-0.8cm}
\caption {Average frequency of join operations per minute as a function of the speed of the PUs achieved over a period of $5$~minutes for a network with $K=4$~mobile PU and different numbers of CPCs.} \label{fig:speed}\vspace{-0.1cm}
\end{figure}

\begin{figure}[!t]
\begin{center}
\includegraphics[width=10cm]{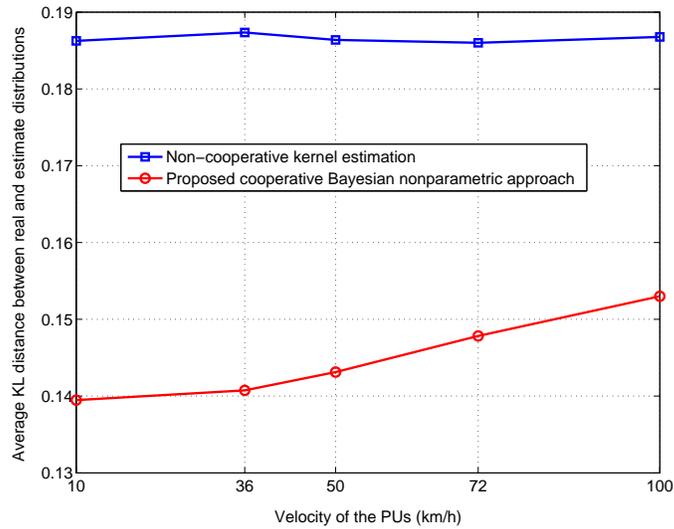}
\end{center}\vspace{-0.8cm}
\caption {Average KL distance between the real distributions and the estimates generated by the CPCs as the speed of the PUs varies over a period of $5$~minutes.} \label{fig:perfspeed}\vspace{-0.1cm}
\end{figure}

In Figure~\ref{fig:perfspeed}, we show how the average KL distance between the real and estimated distributions varies for a network in which the PUs are moving with different speeds. Figure~\ref{fig:perfspeed} shows that, as the speed increases, the average KL distance achieved by the proposed approach increases. This increase is due to the fact that, as the mobility becomes higher, the CPCs become more apt to change their coalitions and, thus, their average KL distance increases due to these changes. Nonetheless, Figure~\ref{fig:perfspeed} shows that the proposed approach maintains its performance advantage, compared to the non-cooperative case, at all PUs' speeds.

\section{Conclusions}\label{sec:conc}
In this paper, we have introduced a novel cooperative approach between the CPC nodes of a cognitive radio network that is suitable for modeling the activity of primary users which is often unknown in practice. Using the proposed cooperative scheme, the CPC nodes can cooperate and form coalitions in order to perform joint Bayesian nonparametric estimation of the distributions of the primary users' activity. We have tackled this problem by formulating a coalitional game between the CPCs and proposing an algorithm for coalition formation. The proposed algorithm allows the CPC nodes to autonomously self-organize into disjoint, independent coalitions. Within each formed coalition, the CPC nodes exchange their non-cooperative distribution estimates and use a combination of Bayesian nonparametric models such as the Dirichlet process and statistical goodness of fit techniques such as the two-sample Kolmogorov-–Smirnov  test, in order to improve the accuracy of the estimated distributions. We have shown the convergence of the proposed algorithm to a Nash-stable partition and we have assessed the properties of the resulting partitions. Simulation results have shown that the proposed algorithm allows a significant improvement in the estimated distribution as quantified by a significant reduction in the Kullback-–Leibler distance between the real, yet unknown (to the CPCs), distributions and the estimates inferred using Bayesian nonparametric techniques. The results also show that the proposed approach enables the CPCs to cope with dynamic changes in their environment. Future work can consider applying the proposed approach for estimating, not only the activity of the primary users, but also the duration of such activity by considering the PUs activity distribution over time. In a nutshell, by marrying concepts from game theory, Bayesian nonparametric estimation, and statistical goodness of fit techniques, we have proposed a novel model for cooperative data estimation that is suitable for many practical applications beyond cognitive networks such as wireless weather services or cooperative multimedia data reconstruction.

\nocite{WS00}
\renewcommand{\baselinestretch}{0.92}
\bibliographystyle{IEEEtran}
\bibliography{references}

% Generated by IEEEtran.bst, version: 1.13 (2008/09/30)
\begin{thebibliography}{10}
\providecommand{\url}[1]{#1}
\csname url@samestyle\endcsname
\providecommand{\newblock}{\relax}
\providecommand{\bibinfo}[2]{#2}
\providecommand{\BIBentrySTDinterwordspacing}{\spaceskip=0pt\relax}
\providecommand{\BIBentryALTinterwordstretchfactor}{4}
\providecommand{\BIBentryALTinterwordspacing}{\spaceskip=\fontdimen2\font plus
\BIBentryALTinterwordstretchfactor\fontdimen3\font minus
  \fontdimen4\font\relax}
\providecommand{\BIBforeignlanguage}[2]{{%
\expandafter\ifx\csname l@#1\endcsname\relax
\typeout{** WARNING: IEEEtran.bst: No hyphenation pattern has been}%
\typeout{** loaded for the language `#1'. Using the pattern for}%
\typeout{** the default language instead.}%
\else
\language=\csname l@#1\endcsname
\fi
#2}}
\providecommand{\BIBdecl}{\relax}
\BIBdecl

\bibitem{CR01}
S.~Haykin, ``Cognitive radio: {Brain-empowered} wireless communications,''
  \emph{{IEEE} J. Select. Areas Commun.}, vol.~23, pp. 201--220, Feb. 2005.

\bibitem{CR02}
D.~Niyato, E.~Hossain, and Z.~Han, \emph{{Dynamic Spectrum Access and
  Management in Cognitive Radio Networks}}.\hskip 1em plus 0.5em minus
  0.4em\relax Cambridge, UK: Cambridge University Press, 2009.

\bibitem{XS04}
E.~Hossain and V.~K. Bhargava, \emph{{Cognitive Wireless Communication
  Networks}}.\hskip 1em plus 0.5em minus 0.4em\relax New York, USA: Springer,
  2007.

\bibitem{XS03}
Q.~Zhang, J.~Jia, and J.~Zhang, ``Cooperative relay to improve diversity in
  cognitive radio networks,'' \emph{{IEEE} Commun. Mag.}, vol.~47, no.~2, pp.
  111--117, Feb. 2009.

\bibitem{XS00}
J.~Jia, Q.~Zhang, and X.~Shen, ``{HC-MAC: A} hardware-constrained cognitive
  {MAC} for efficient spectrum management,'' \emph{{IEEE} J. Select. Areas
  Commun.}, vol.~26, no.~1, pp. 106--117, Jan. 2008.

\bibitem{XS01}
A.~Alshamrani, X.~Shen, and L.~Xie, ``A cooperative {MAC} with efficient
  spectrum sensing algorithm for distributed opportunistic spectrum networks,''
  \emph{Journal of Communications}, vol.~4, no.~10, pp. 728--740, Nov. 2009.

\bibitem{SA00}
S.~Huang, X.~Liu, and Z.~Ding, ``Optimal sensing-transmission structure for
  dynamic spectrum access,'' in \emph{Proc.\ Int.\ Conf.\ on Computer
  Communications (INFOCOM)}, Rio de Janeiro, Brazil, Apr. 2009.

\bibitem{Chaud}
S.~Chaudhari, V.~Koivunen, and H.~V. Poor, ``Autocorrelation-based
  decentralized sequential detection of {OFDM} signals in cognitive radios,''
  \emph{{IEEE} Trans. Signal Processing}, vol.~57, no.~7, pp. 2690--2700, Jul.
  2009.

\bibitem{Lund}
J.~Lund\'{e}n, V.~Koivunen, A.~Huttunen, and H.~V. Poor, ``Collaborative
  cyclostationary spectrum sensing for cognitive radio systems,'' \emph{{IEEE}
  Trans. Signal Processing}, vol.~57, no.~11, pp. 4182--4195, Nov. 2009.

\bibitem{Axel}
E.~Axell, G.~Leus, E.~G. Larsson, and H.~V. Poor, ``Spectrum sensing for
  cognitive radio: State-of-the-art and recent advances,'' \emph{{IEEE} Signal
  Processing Mag.}, to appear 2011.

\bibitem{DT00}
D.~Cabric, M.~S. Mishra, and R.~W. Brodersen, ``Implementation issues in
  spectrum sensing for cognitive radios,'' in \emph{Proc. Asilomar Conf. on
  Signals, Systems, and Computers}, Pacific Grove, USA, Nov. 2004.

\bibitem{CS00}
A.~Ghasemi and E.~S. Sousa, ``Collaborative spectrum sensing for opportunistic
  access in fading environments,'' in \emph{IEEE Symp. New Frontiers in Dynamic
  Spectrum Access Networks}, Baltimore, USA, Nov. 2005.

\bibitem{CS03}
W.~Zhang and K.~{Ben Letaief}, ``Cooperative spectrum sensing with transmit and
  relay diversity in cognitive networks,'' \emph{{IEEE} Trans. Wireless
  Commun.}, vol.~7, no.~12, pp. 4761--4766, Dec. 2008.

\bibitem{CS07}
G.~Ganesan and Y.~Li, ``Cooperative spectrum sensing in cognitive radio: Part
  {I}: two user networks,'' \emph{{IEEE} Trans. Wireless Commun.}, vol.~6,
  no.~6, pp. 2204--2213, Jun. 2007.

\bibitem{CS08}
------, ``Cooperative spectrum sensing in cognitive radio: {Part II}: multiuser
  networks,'' \emph{{IEEE} Trans. Wireless Commun.}, vol.~6, no.~6, pp.
  2214--2222, Jun. 2007.

\bibitem{CS02}
K.~Lee and A.~Yener, ``Throughput enhancing cooperative spectrum sensing
  strategies for cognitive radios,'' in \emph{Proc. of Asilomar Conf. on
  Signals, Systems and Computers}, Pacific Grove, CA, USA, Nov. 2007.

\bibitem{CS04}
B.~Wang, K.~J.~R. Liu, and T.~Clancy, ``Evolutionary game framework for
  behavior dynamics in cooperative spectrum sensing,'' in \emph{Proc.\ IEEE
  Global Commun. Conf.}, New Orleans, USA, Dec. 2008.

\bibitem{CS05}
Y.~Liang, Y.~Zeng, E.~Peh, and A.~T. Hoang, ``Sensing-throughput tradeoff for
  cognitive radio networks,'' \emph{{IEEE} Trans. Wireless Commun.}, no.~7, pp.
  1326--1337, Apr. 2008.

\bibitem{CS06}
L.~S. Cardoso, M.~Debbah, P.~Bianchi, and J.~Najim, ``Cooperative spectrum
  sensing using random matrix theory,'' in \emph{Proc. of International
  Symposium on Wireless Pervasive Computing}, Santorini, Greece, May 2008.

\bibitem{CH01}
D.~Raychaudhuri and X.~Jing, ``A spectrum etiquette protocol for efficient
  coordination of radio devices in unlicensed bands,'' in \emph{Proc. of IEEE
  Int. Symp. on Personal, Indoor and Mobile Radio Communications (PIMRC)},
  Beijing, China, Sep. 2003.

\bibitem{CH03}
P.~Houzé, S.~B. Jemaa, and P.~Cordier, ``Common pilot channel for network
  selection,'' in \emph{Proc. of IEEE Vehicular Technology - Spring},
  Melbourne, Australia, May 2006.

\bibitem{CH04}
M.~Filo, A.~Hossain, A.~R. Biswas, and R.~Piesiewicz, ``Cognitive pilot
  channel: Enabler for radio systems coexistence,'' in \emph{Proc. of Second
  Int. Workshop on Cognitive Radio and Advanced Spectrum Management}, Aalborg,
  Denmark, May 2009.

\bibitem{CH06}
J.~Perez-Romero, O.~Salient, R.~Agusti, and L.~Giupponi, ``A novel on-demand
  cognitive pilot channel enabling dynamic spectrum allocation,'' in
  \emph{Proc. of IEEE Int. Symp. on New Frontiers in Dynamic Spectrum Access
  Networks (DySPAN)}, Dublin, Ireland, Apr. 2007.

\bibitem{XS02}
Q.~Zhang, Z.~Feng, and G.~Zhang, ``A novel homogeneous mesh grouping scheme for
  broadcast cognitive pilot channel in cognitive wireless networks.''

\bibitem{KDE00}
D.~W. Scott, \emph{Multivariate Density Estimation. Theory, Practice and
  Visualization}.\hskip 1em plus 0.5em minus 0.4em\relax New York, USA: Wiley,
  1992.

\bibitem{KDE01}
M.~P. Wand and M.~C. Jones, \emph{Kernel Smoothing}.\hskip 1em plus 0.5em minus
  0.4em\relax London, UK: Chapman and Hall, 1995.

\bibitem{KDE02}
Z.~I. Botev, J.~F. Grotowski, and D.~P. Kroese, ``Kernel density estimation via
  diffusion,'' \emph{Annals of Statistics}, vol.~38, no.~5, pp. 2916--2957,
  Nov. 2010.

\bibitem{Game_theory2}
R.~B. Myerson, \emph{Game Theory, Analysis of Conflict}.\hskip 1em plus 0.5em
  minus 0.4em\relax Cambridge, MA, USA: Harvard University Press, 1991.

\bibitem{WS00}
W.~Saad, Z.~Han, M.~Debbah, A.~Hj{\o}rungnes, and T.~Ba\c{s}ar, ``Coalition
  game theory for communication networks,'' \emph{IEEE Sig. Proc. Mag.},
  vol.~26, no.~5, pp. 77--97, Sep. 2009.

\bibitem{KS00}
G.~W. Corder and D.~I. Foreman, \emph{Nonparametric Statistics for
  Non-Statisticians: A Step-by-Step Approach}.\hskip 1em plus 0.5em minus
  0.4em\relax New York, USA: Wiley, 2009.

\bibitem{KS01}
G.~R. Shorak and J.~A. Wellner, \emph{Empirical Pocesses with Applications to
  Statistics}.\hskip 1em plus 0.5em minus 0.4em\relax New York, USA: Wiley,
  1986.

\bibitem{DP00}
J.~K. Ghosh and R.~V. Ramamoorthi, \emph{Bayesian Nonparametrics}.\hskip 1em
  plus 0.5em minus 0.4em\relax New York, USA: Springer, 2010.

\bibitem{DP01}
Y.~W. Teh, ``{D}irichlet processes,'' \emph{Encyclopedia of Machine Learning,
  Springer}, 2010.

\bibitem{DP02}
T.~S. Ferguson, ``A {Bayesian} analysis of some nonparametric problems,''
  \emph{Annals of Statistics}, vol.~1, no.~2, pp. 209--230, Mar. 1973.

\bibitem{TC00}
T.~Cover and J.~Thomas, \emph{Elements of Informaton Theory}.\hskip 1em plus
  0.5em minus 0.4em\relax New York, USA: Wiley, 1991.

\bibitem{DR00}
D.~Ray, \emph{A Game-Theoretic Perspective on Coalition Formation}.\hskip 1em
  plus 0.5em minus 0.4em\relax New York, USA: Oxford University Press, 2007.

\bibitem{OV01}
G.~Chalkiadakis, E.~Elkind, E.~Markakis, M.~Polukarov, and N.~R. Jennings,
  ``Cooperative games with overlapping coalitions,'' \emph{Journal of
  Artificial Intelligence Research}, vol.~39, pp. 179--216, Sep. 2010.

\bibitem{OV02}
Y.~Zick and E.~Elkind, ``Arbitrators in overlapping coalition formation
  games,'' in \emph{Proc.\ Int.\ Conf.\ on Autonomous Agents and Multiagent
  Systems}, Taipei, Taiwan, May 2011.

\bibitem{HC00}
A.~Bogomonlaia and M.~Jackson, ``The stability of hedonic coalition
  structures,'' \emph{Games and Economic Behavior}, vol.~38, pp. 201--230, Jan.
  2002.

\bibitem{HC01}
J.~H. Dr\`{e}ze and J.~Greenberg, ``Hedonic coalitions: optimality and
  stability,'' \emph{Econometrica}, vol.~48, no.~4, pp. 987--1003, May 1980.

\bibitem{ND00}
C.~J.~L. Arachchige, S.~Venkatesan, and N.~Mittal, ``An asynchronous neighbor
  discovery algorithm for cognitive radio networks,'' in \emph{Proc. IEEE Symp.
  on New Frontiers in Dynamic Spectrum Access Networks (DySPAN)}, Chicago, IL,
  USA, Oct. 2008.

\end{thebibliography}

\end{document}